\DeclareMathOperator{\plu}{plu}
\newcommand{\bsigma}{{\boldsymbol{\sigma}}}
\newcommand{\bsigmaSTV}{\bsigma_{STV}}
\newcommand{\balpha}{{\boldsymbol{\alpha}}}
\newcommand{\profiledist}{\Pi}
\DeclarePairedDelimiter{\set}{\{}{\}}
\newcommand{\spn}{\textup{span}}
\DeclareMathOperator{\score}{sc}
\newcommand{\atob}{{a \leftrightarrow b}}
\DeclareMathOperator{\E}{\mathbb{E}}
\newcommand{\unif}{\text{unif}}
\newcommand{\Unif}{\text{Unif}}
\newcommand{\pos}{\txt{pos}}
\newtheorem{lemma}{Lemma}
\newtheorem{theorem}{Theorem}
\DeclarePairedDelimiter{\floor}{\lfloor}{\rfloor}
\definecolor{green}{HTML}{009900}
\definecolor{green}{rgb}{0,0.5977,0}
\newcommand{\rr}{\mathbb R}
\newcommand{\abs}[1]{\left|{#1}\right|}
\newcommand{\suchthat}{\ | \ }
\newcommand{\tth}{^\text{th}}
\DeclareMathOperator*{\argmin}{argmin}
\newcommand{\txt}[1]{\text{#1}}
\newcommand{\stext}[1]{\ \ \ \ \ \text{(#1)}}
\g@addto@macro{\@algocf@init}{\SetKwInOut{Parameter}{Parameters}} 
\begin{document}
\title{Computing Voting Rules with Elicited Incomplete Votes}
\author{        
        Daniel Halpern \\ 
        Harvard University \\ 
        \texttt{dhalpern@g.harvard.edu}
        \and
        Safwan Hossain \\ 
        Harvard University \\ 
        \texttt{shossain@g.harvard.edu}\vspace{0.7em}
        \and 
        Jamie Tucker-Foltz \\ 
        Harvard University\\ 
        \texttt{jtuckerfoltz@gmail.com} 
}

\date{}

\maketitle

\begin{abstract}
    Motivated by the difficulty of specifying complete ordinal preferences over a large set of $m$ candidates, we study voting rules that are computable by querying voters about $t < m$ candidates. Generalizing prior works that focused on specific instances of this problem, our paper fully characterizes the set of positional scoring rules that can be computed for any $1 \leq t < m$, which, notably, does not include plurality. We then extend this to show a similar impossibility result for single transferable vote (elimination voting). These negative results are information-theoretic and agnostic to the number of queries. Finally, for scoring rules that are computable with limited-sized queries, we give parameterized upper and lower bounds on the number of such queries a deterministic or randomized algorithm must make to determine the score-maximizing candidate. While there is no gap between our bounds for deterministic algorithms, identifying the exact query complexity for randomized algorithms is a challenging open problem, of which we solve one special case.
\end{abstract}

\maketitle

\section{Introduction}\label{secIntro}

Traditional social choice frameworks typically assume that voting rules have access to each voter's complete ordinal preferences over all candidates. Indeed, this is seen in practice as well with the widening adoption of ranked-choice voting systems~\citep{FairVote}, requiring voters to submit such information. Whether such information can actually be reliably elicited depends significantly on the context. If the number of candidates is small and voters have strong opinions, it may indeed be reasonable for them to provide a complete ranking. However, these assumptions do not hold in many scenarios. Primary elections in the United States, for example, routinely field large numbers of candidates, with many being unfamiliar to voters \citep{hirano2019primary}. 

A classic line of work in behavioral economics and psychology supports the premise that individuals struggle in such scenarios. 
In his seminal work, \citet{schwartz2004paradox} puts forth the \emph{paradox of choice}: individuals incur increased anxiety when faced with too many alternatives, which often leads them to take a default action, defer, or not participate altogether~\citep{iyengar2000choice}. Recent literature has shown this phenomenon to hold specifically in the voting and social choice setting. \citet{cunow2021less, cunow2023too} experimentally show that even increasing the number of candidates from 3 to 6 leads voters to spend less effort learning candidates' policy positions and instead rely on arbitrary heuristics. This voter frustration is also evidenced in practice: incomplete ballots are quite common, which are often completely exhausted under elimination voting long before the final candidate is elected~\citep{Exhaustion}.  

These cognitive challenges are further exacerbated in contexts where the ``candidates'' are not politicians but \emph{opinions}, of which there may be very many. Prime examples of such contexts can be found in online platforms like \emph{Polis}~\citep{Polis}, \emph{Remesh},\footnote{\url{https://www.remesh.ai/}} \emph{All Our Ideas},\footnote{\url{https://allourideas.org/}} and \emph{Loomio},\footnote{\url{https://www.loomio.com/}} which facilitate deliberation, build consensus, and ultimately aggregate opinions on a specific topic. These platforms allow users to both submit opinions as free-form text and vote on submissions of others. Polis, for example, was deployed by the government of Taiwan to gauge sentiment on the regulation of ride-share apps, ultimately leading to new legislation \citep{horton2018simple}. Here, asking voters' opinions across all submissions can be far too time-consuming or downright infeasible. 
Instead, Polis makes a natural simplification by only showing a subset of opinions to each user. But what meaningful conclusions can be drawn from querying users over such limited data? And how should the platform select such queries? 

Recent work by \citet{halpern2022representation} studies this question in the context of approval votes, where the goal is to select a representative ``committee'' of size $k$. Voters from the population arrive randomly and can be presented with at most $t < m$ candidates (opinions) at a time, over which they can express their approval or disapproval. With sufficient arrivals, a committee selection algorithm can estimate the distribution of the population's approvals of any set of at most $t$ candidates. In an idealized query model, it is assumed that the algorithm exactly obtains this distribution in a single query. \citet{halpern2022representation} gives adaptive query algorithms to find committees satisfying the standard axioms of \emph{extended justified representation} and \emph{proportionality}. On the other hand, they also show information-theoretic lower bounds on the number of queries non-adaptive algorithms must make to guarantee a representative committee. 

 Our paper extends this framework to consider ordinal preferences, with the more classic goal of selecting a single winner (rather than a committee) under a given rule. There is a distribution over rankings of the $m$ candidates representing the underlying voter preferences that is unknown to the algorithm. It may, however, \emph{query} randomly arriving individuals about any subset of $t$ candidates, where $t < m$; with suitable samples, the algorithm can determine the corresponding ranking distribution over this subset.\footnote{Clearly, having access to the exact distribution is strictly more informative than having to approximate it through repeated samples. Our negative results hold for this idealized setting and thus immediately apply to the weaker and more realistic query model.} This leads to the fundamental question: what is the set of voting rules that are implementable with such small-sized queries chosen by the platform? Conceptually, this question asks about the axiomatic implications of cognitive barriers to preference elicitation in social choice.  


\subsection{Contributions}

We begin our investigation with the ubiquitous plurality rule. Here, we find a surprisingly negative result: determining a plurality winner cannot be done using queries of any size $t < m$ (\Cref{thmPluralityCounter}). That is, even if one has access to the distribution of the population's rankings over every $m-1$ subset of the $m$ candidates, it is still impossible to correctly identify who received the most first-place votes. In fact, even a randomized algorithm can only correctly choose a plurality winner with probability $\frac{1}{m}$ in the worst case, i.e., there are instances where, no matter which queries an algorithm makes, it can do no better than picking a candidate uniformly at random. The proof follows from a novel construction of pairs of profiles that have different plurality winners but induce the same distribution on any subset of $m - 1$ candidates. \Cref{secConstruction} is dedicated to explaining this construction, which forms the basis for all of the impossibility results in this paper.

Plurality is just one example of a \emph{positional scoring rule}, whereby candidates receive points corresponding to their rank position in each ballot, with the winner being the candidate with the most aggregated points~\citep{Young75}. While plurality requires queries of size $m$, it is known that another positional scoring rule, the Borda count, only requires pairwise margins to determine a winner and, hence, can be computed with queries of size 2.
For general $t$-sized queries, one straightforward algorithm is to query every subset of size $t$ and give each candidate a certain number of points depending on which of the $t$ positions they appear.
Under a different query model, \citet{bentert2020comparing} give a family of scoring rules for each size $t$ for which this algorithm works.\footnote{In their model, each voter submits a ranking over a \emph{randomly} selected set of $t$ candidates.} A similar argument for the same families of rules holds for our model as well, which we prove in \Cref{lemRmtClosedForm} for completeness, along with a more detailed comparison between our work and theirs. We then proceed to our second main result (\Cref{thmCharacterizationPositional}) that shows that these rules are, in fact, the \emph{only} ones that can be implemented with queries of size $t$. We thus obtain a complete characterization of all computable positional scoring rules under a limited query model and also give visualizations of this space. Our analysis is then extended to the \emph{single transferrable vote (STV)} rule (also known as \emph{instant-runoff voting})
wherein we find a negative result akin to Plurality: algorithms with limited query size ($t < m$) cannot correctly implement this rule. All of these negative results again hold not only for deterministic algorithms but for randomized ones that are correct with probability strictly better than $\frac{1}{m}$ (i.e. better than randomly guessing).

We next use our characterization to study the rules that \emph{are} computable with small-sized queries. Specifically, for any scoring rule requiring $t^*$-sized queries, we lower bound the number of $t$-sized queries ($t^* \leq t < m$) needed to compute the winner under this rule (\Cref{thmCovering}). If $t$ and $t^*$ are treated as constants, then $\Theta(m^{t^*})$ queries are needed in the worst case. This asymptotic bound holds even for randomized algorithms that are correct with probability $\frac{1}{m} + \varepsilon$ for any constant $\varepsilon > 0$. If on the other hand an algorithm covers a $\delta$ fraction of all $t^*$-subsets, we give an upper bound on the success probability. Although this is not tight, we exactly determine the optimal success probability for Borda count with $m=3$ (\Cref{thmQC32}) with a surprisingly intricate construction, and leave open the general case.


\subsection{Related Work}

Our work contributes to a line of literature on the information-theoretic aspects of voting and elections, specifically on what can be accomplished with incomplete information. There are a variety of lenses through which to study this problem. One line of work considers the communication complexity of computing various voting rules~\citep{CS05}. Another studies when using incomplete votes can guarantee a candidate must or cannot be the winner, regardless of the missing information~\citep{KL05,XC11}. Others consider different ``approximation'' objectives such as minimax regret~\citep{LB11} and distortion~\citep{PR06}. For a more complete survey, see chapter 10 by \citet{Handbook}. 

More specific to our work are models where the partial information given is $t$-wise comparisons. The special case of $t=2$ corresponds to only being given pairwise comparisons. All information about pairwise comparisons can be summarized in a \emph{weighted tournament graph}, a widely studied object in social choice theory~\citep{Handbook}. For example, there is a classification of common voting rules into those that can be computed using just the tournament graph (such as \emph{Borda Count}, \emph{Minimax}, \emph{Kemeny}, and \emph{Copeland}) and those that cannot~\citep{Fish77}. Any of these weighted tournament solutions can trivially be computed with queries of size $t = 2$. Beyond information-theoretic results, there are even bounds on query complexity, such as how many queries to the tournament graph are needed to compute Condorcet winners~\citep{Pro08b}. Our paper is a natural extension of this literature to the realm of more powerful $t$-wise comparison queries for $t > 3$.

Related, but technically incomparable, is when voters reveal a ranking of their top $t$ candidates for a fixed value of $t$~\citep{oren2013efficient,FO14}. This was one of the two models studied by \citet{bentert2020comparing}. The other has voters revealing a ranking over a \emph{random} set of $t$ candidates. Positive results here translate to our model, although negative ones do not. We describe this connection and their results more in-depth in \Cref{subsec:characterization}.

Finally, note that we largely study information-theoretic impossibilities and thus do not focus on the randomized arrivals aspect; other works do consider the sample complexity of computing various rules~\citep{dey2015sample}; however, they assume randomly sampled voters reveal their complete preferences over all candidates.




\section{Preliminaries}\label{secModel}

\subsection{Voter preferences}

For a positive integer $s$, let $[s] := \set{1, \ldots, s}$. A \emph{ranking} or \emph{preference} over a set $C$ of $m$ \emph{candidates} is a bijection $\sigma: [m] \to C$, where $\sigma(j)$ represents the $j$'th most-preferred candidate according to $\sigma$. We use the standard notation $a \succ_\sigma b$ to denote that $a$ is preferred to $b$ under $\sigma$, i.e., $\sigma^{-1}(a) < \sigma^{-1}(b)$, where $\sigma^{-1}$ is the inverse mapping from candidates to rankings. We write $\mathcal{L}(C)$ to denote the set of all $m!$ rankings over the candidates in $C$.  For a subset of candidates $S \subseteq C$, we write $\sigma|_S$ to denote the ranking $\sigma$ restricted to the candidates in $S$, i.e., $\sigma|_S \in \mathcal{L}(S)$, with $\sigma|_S(j)$ being the $j$'th most preferred among those in $S$ according to $\sigma$.

For a permutation over the candidates $\pi: C \to C$, we will write $\pi \circ \sigma$ for the ranking $\sigma$ permuted by $\pi$, i.e., $(\pi \circ \sigma)(j) = \pi(\sigma(j))$. Of particular interest will be permutations that swap a single pair of candidates. For this reason, for two candidates $a$ and $b$, we define $\pi^{ab}$ to be the \emph{$(a, b)$-transposition}, the permutation that swaps $a$ and $b$, i.e., $\pi^{ab}(a) = b$, $\pi^{ab}(b) = a$, and $\pi^{ab}(c) = c$ for all $c \ne a, b$. Further, we will write $\sigma^\atob$ for $\pi^{ab} \circ \sigma$.

A \emph{preference profile} (or simply a \emph{profile}) is a distribution $\bsigma$ over preferences $\mathcal{L}(C)$, representing the proportion of voters in the population that have each ranking. For example,
$$\Pr_{\sigma \sim \bsigma}[\sigma = a \succ b \succ c] = \frac15$$
means that $\frac15$ of the voters have the ranking $a \succ b \succ c$.\footnote{More often in social choice, a profile is a ranking assignment for a finite number of $n$ agents. The distributional definition is essentially equivalent insofar as voter identity is not important (as is the case for all rules we study) while having the additional benefit of making our query model and proof techniques easier to understand. However, we could have equivalently used the more traditional definitions, and all of our results would still hold.} We denote by $\profiledist(C)$ the space of all possible preference profiles over a candidate set $C$. For a profile $\bsigma \in \profiledist(C)$, $\sigma \sim \bsigma$ denotes sampling a preference $\sigma$ from distribution $\bsigma$. For a set of rankings $R \subseteq \mathcal{L}(C)$, we will also use the notation $\sigma \sim \Unif(R)$ to denote sampling a ranking $\sigma$ uniformly from $R$. We extend the restriction $\sigma|_S$, permutation $\pi \circ \sigma$, and transposition $\sigma^\atob$ operations from rankings to profiles in the natural way. More formally, for restrictions, we write $\bsigma|_S$ to denote the distribution $\bsigma$ when restricted to candidates in $S$, i.e., $\bsigma|_S$ is an element of $\profiledist(S)$ induced by sampling $\sigma \sim \bsigma$ and outputting $\sigma|_S$. For permutations, $\pi \circ \bsigma$ is the distribution induced by sampling $\sigma \sim \bsigma$ and outputting $\pi \circ \bsigma$. For transpositions, $\bsigma^\atob = \pi^{ab} \circ \bsigma$. For a set $S$, we also use the notation $\Unif(S)$ to denote the uniform distribution over elements of $S$.

\subsection{Voting rules}

A \emph{voting rule} $f$ maps preference profiles 
to a set of winning candidates.
If a candidate $c$ is amongst the winners for a voting rule $f$, we refer to this candidate as an $f$-winner.

We will primarily focus on \emph{positional scoring rules} (or simply \emph{scoring rules}), which are a very practical and well-studied class.
These are parameterized by a \emph{scoring vector} $\balpha = (\alpha_1, \ldots, \alpha_m) \in \mathbb{R}^m$.\footnote{Often, scoring vectors are restricted to be nonnegative and nonincreasing, but, for our purposes, it will be more convenient to allow for arbitrary vectors.} Intuitively, a voter with ranking $\sigma$ gives $\alpha_1$ points to their first place candidate $\sigma(1)$, $\alpha_2$ points to their second place candidate $\sigma(2)$, and so on, with the winner of the profile being the candidate with the most aggregated points. Written in our distributional notation, the score of a candidate $c$ on profile $\bsigma$ is $\score^{\balpha}_{\bsigma}(c) := \E_{\sigma \sim \bsigma}[\alpha_{\sigma^{-1}(c)}]$, and the winning candidates on a profile $\bsigma$ are those with maximal score. Some common scoring rules include \emph{plurality}, parameterized by $(1, 0, \ldots, 0)$, \emph{veto}, parameterized by $(0, \ldots, 0, -1)$, and \emph{Borda count}, parameterized by $(m-1, m-2, \ldots, 0)$. Since the plurality score will come up quite frequently, we will write $\plu_{\bsigma}(c)$ instead of using the $\score^{\balpha}_{\bsigma}(c)$ notation, and note that this simplifies to $\plu_{\bsigma}(c) = \Pr_{\sigma \sim \bsigma}[\sigma(1) = c]$. For conciseness, we will use $\balpha$-winner, plurality winner, veto winner, and Borda winner to refer to $f$-winners of the scoring rule induced by $\balpha$, plurality, veto, and Borda count, respectively. Note that $\balpha$-winners are invariant under modifying $\balpha$ via translation or multiplication by a positive constant, e.g., veto-winners (with vector $(0, \ldots, 0, -1)$) coincide with both $(1, \ldots, 1, 0)$-winners and $(\frac1{m-1}, \ldots, \frac1{m-1}, 0)$-winners.

In addition to scoring rules, we will also consider the rule \emph{Single Transferable Vote (STV)}, which is defined as follows. For a profile $\bsigma$, if there is a single candidate, it returns that candidate. Otherwise, it chooses a candidate $c$ with \emph{minimal} plurality score, deletes them from the profile, and recurses on the rest. More formally, it chooses $c \in \argmin_{c'} \plu_{\bsigma}(c')$, and runs STV on $\bsigma | _{C \setminus \set{c}}$. This must eventually terminate, as a candidate is removed at each iteration. Further, for $m=2$, this coincides with plurality. Note that in some cases, there are ties for the minimal score candidate. Hence, we will say that $c$ is an STV winner if some sequence of valid eliminations results in $c$ being the winner.

\subsection{Query model}
We consider (possibly randomized) algorithms that are allowed to adaptively submit queries to the underlying distribution $\bsigma$. For a fixed parameter $t$, each query consists of a subset of candidates $Q \subseteq C$ with $|Q| \le t$ (referred to as a \emph{query of size $t$}, or a \emph{$t$-query} for short) and returns the distribution $\bsigma|_Q$. As mentioned previously, having access to the exact distribution is strictly more informative than one approximated by a random voter arriving (a voter $\sigma \sim \bsigma$ arrives, and the algorithm learns $\sigma|_Q$). All our impossibility results hold in this idealized setting and thus immediately apply to the more realistic one. 

Two profiles $\bsigma^1$ and $\bsigma^2$ are said to be \emph{$t$-indistinguishable} if for all subsets $Q \subseteq C$ with $|Q| \le t$, $\bsigma^1|_Q = \bsigma^2|_Q$. That is, regardless of whether the profile is $\bsigma^1$ or $\bsigma^2$, any query $Q$ with $|Q| \le t$ will have the same response.
Importantly, if $\bsigma^1$ and $\bsigma^2$ are $t$-indistinguishable, but $f(\bsigma^1) \cap f(\bsigma^2) = \emptyset$, then no $t$-query algorithm can always output an $f$-winner. Note that to check whether two profiles are $t$-indistinguishable, it suffices  to check only queries $Q$ of size exactly $t$, as if $Q' \subseteq Q$, then $\bsigma|_{Q'} = (\bsigma|_Q)|_{Q'}$, so $\bsigma^1|_Q = \bsigma^2|_Q$ implies $\bsigma^1|_{Q'} = \bsigma^2|_{Q'}$. Finally, notice that the special case of $2$-indistinguishable is equivalent to $\bsigma^1$ and $\bsigma^2$ having the same \emph{weighted tournament graph}, the complete-directed graph, where the nodes are candidates, and the weight on edge $(a, b)$ is the proportion of voters that prefer $a$ to $b$~\citep{Handbook}. In this sense, the collection of distributions $\{\bsigma|_Q\}_{Q \subseteq C : |Q| = t}$ are a generalization of the weighted tournament graph to arbitrary $t \ge 2$ (e.g., for $t=3$, rather than consisting of proportions of people that prefer $a$ to $b$, it consists of the proportions of people that prefer $a$ to $b$ to $c$ for all such combinations).




\section{An Indistinguishable Construction}\label{secConstruction}

We begin with a construction of two indistinguishable profiles which will be used throughout our technical results. 

\begin{lemma}\label{lem:construction}
    For any $m$ and pair of candidates $a, b \in C$, there is a profile $\bsigma$ such that (i) $\plu_{\bsigma}(a) \ne \plu_{\bsigma}(b)$  and (ii) $\bsigma$ and $\bsigma^\atob$ are $(m-1)$-indistinguishable.
\end{lemma}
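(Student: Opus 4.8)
\textbf{The plan} is to reduce the lemma to constructing a single \emph{signed} measure on rankings with three clean properties, and then to realize that measure as the difference $\bsigma - \bsigma^\atob$ of a genuine profile and its swap. First I would record the reformulation. Since applying $\pi^{ab}$ interchanges the roles of $a$ and $b$, we have $\plu_{\bsigma^\atob}(a) = \plu_{\bsigma}(b)$, so (i) is equivalent to $\plu_{\bsigma}(a) - \plu_{\bsigma}(b) = \plu_{\bsigma}(a) - \plu_{\bsigma^\atob}(a) \neq 0$. Viewing $\nu := \bsigma - \bsigma^\atob$ as a signed measure on $\mathcal{L}(C)$, condition (ii) says exactly that $\nu|_Q = 0$ for every $Q$ with $|Q| = m-1$ (it suffices to treat $|Q| = m-1$, as noted above), and $\nu$ is \emph{antisymmetric}, $\pi^{ab} \circ \nu = -\nu$. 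Thus the lemma follows once I produce a signed measure $\nu$ on $\mathcal{L}(C)$ that (a) is antisymmetric under $\pi^{ab}$, (b) satisfies $\nu|_Q = 0$ for all $(m-1)$-subsets $Q$, and (c) has $\sum_{\sigma : \sigma(1) = a} \nu(\sigma) \neq 0$.

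The passage from such a $\nu$ back to a profile is routine. Let $\nu_+, \nu_-$ be the positive and negative parts of $\nu$; antisymmetry and uniqueness of this decomposition force $\pi^{ab} \circ \nu_+ = \nu_-$, so $\nu_+$ and $\nu_-$ have a common total mass $M$ and $\nu$ has total mass $0$. Scaling $\nu$ by a small positive constant (which preserves (a)--(c)) I may assume $M \leq 1$, and I set $\bsigma := (1-M)\,\Unif(\mathcal{L}(C)) + \nu_+$. This is a bona fide profile, and because the uniform profile is $\pi^{ab}$-invariant, $\bsigma - \bsigma^\atob = \nu_+ - \nu_- = \nu$; then (b) gives (ii) and (c) gives (i).

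The real content is constructing $\nu$. A natural first attempt is a ``reversal seed'' $\nu_0 = [\rho] + [\mathrm{rev}\,\rho] - [\rho^\atob] - [\mathrm{rev}\,\rho^\atob]$ for a ranking $\rho$ with $a$ on top and a candidate other than $b$ on the bottom: this is antisymmetric by construction, its $a$- and $b$-deletion marginals cancel in pairs, and the choice of $\rho$ guarantees $\sum_{\sigma(1)=a}\nu_0(\sigma) \neq 0$. The obstruction is that deleting a \emph{third} candidate $c \notin \{a,b\}$ leaves a nonzero residual marginal --- essentially a scaled copy of the same construction on $C \setminus \{c\}$ --- so $\nu_0$ alone violates (b). I would therefore induct on $m$ (base case $m=2$: $\nu = [a \succ b] - [b \succ a]$, which already works), adding correction terms built from the lower-dimensional constructions on the sets $C \setminus \{c\}$ to cancel these residual deletion-marginals, while maintaining antisymmetry and not cancelling the net top-position weight demanded by (c).

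The crux --- the step I expect to be hardest --- is meeting all $m$ deletion constraints \emph{simultaneously} while keeping (c) nonzero, since these requirements pull against each other. Vanishing of every $(m-1)$-marginal is a strong ``sign-type'' cancellation condition: indeed the pure sign measure $\nu(\sigma) = \mathrm{sgn}(\sigma)$ already satisfies (a) and, for even $m$, (b), yet fails (c) because for $m \geq 3$ one has $\sum_{\sigma(1)=a}\mathrm{sgn}(\sigma) = \sum_{\tau \in \mathcal{L}(C \setminus \{a\})}\mathrm{sgn}(\tau) = 0$. So the construction must break precisely this symmetry --- concentrating net weight on ``$a$ first'' --- without reviving any nonzero $(m-1)$-marginal. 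Concretely, the deletion constraints force, for every third candidate $c$, the identity $\sum_{\sigma(1)=a}\nu(\sigma) = -\sum_{\sigma(1)=c,\,\sigma(2)=a}\nu(\sigma)$, so the top-position mass must be balanced against these ``$c$ first, $a$ second'' contributions for all $c$ at once; verifying this balance across every subset is the bookkeeping the explicit construction has to carry out.
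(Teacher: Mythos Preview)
Your reformulation is correct and clean: finding $\bsigma$ is equivalent to finding an antisymmetric signed measure $\nu$ on $\mathcal{L}(C)$ with all $(m-1)$-marginals zero and nonzero top-$a$ mass, and the passage back to a profile via $\bsigma = (1-M)\Unif + \nu_+$ is fine. But the proposal stops exactly where the lemma's content begins: you never actually build $\nu$. The seed $\nu_0$ works only for $m \le 3$; for $m \ge 4$ you acknowledge that deleting any $c \notin \{a,b\}$ leaves a residual, and you propose to cancel these inductively by lifting lower-dimensional constructions. You do not say how to lift $\nu^{(C\setminus c)}$ back to $\mathcal{L}(C)$, you do not verify that the correction for one $c$ does not reintroduce a residual at some other $c'$, and you do not show the top-$a$ mass survives. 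Since any naive lift (say, inserting $c$ at a fixed position, or averaging over positions) will have its own nonzero marginals on the other $(m-1)$-subsets, the one-deletion-at-a-time scheme threatens an infinite regress; at minimum a simultaneous construction is needed, and none is given. Your final paragraph is still analyzing the constraints rather than satisfying them, and as you say yourself, this is ``the crux''---but it is also the whole lemma.

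The paper bypasses all of this with one idea: always place $a$ and $b$ adjacent, choose a uniformly random subset $S \subseteq C\setminus\{a,b\}$ to go above the $a,b$ block (with the rest below, both internally ordered uniformly), and put $a$ before $b$ iff $|S|$ is even. In your language this produces the explicit antisymmetric measure supported on rankings with $a,b$ adjacent,
\[
\nu\bigl(\tau^S \succ a \succ b \succ \tau^{\overline S}\bigr) = -\,\nu\bigl(\tau^S \succ b \succ a \succ \tau^{\overline S}\bigr) = \frac{(-1)^{|S|}}{2^{m-2}\,|S|!\,(m-2-|S|)!}.
\]
Condition (c) holds because only $S=\emptyset$ contributes to the top-$a$ sum; condition (b) for $Q$ missing $a$ or $b$ is immediate since the adjacent block collapses; and for $Q$ missing some $d \in C\setminus\{a,b\}$ it reduces to the fact that, conditional on $S \cap Q$ being any fixed set, the free coordinate $d$ makes $|S|$ equally likely even or odd. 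This single parity trick replaces your entire inductive correction apparatus and delivers all deletion constraints at once.
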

\begin{proof}
Fix $m$, $a$, and $b$. 
Let $C^{-ab} = C \setminus \set{a, b}$ be the set of remaining candidates. We have that $|C^{-ab}| = m - 2$. We define $\bsigma$ to be the distribution induced by the following random process. First, pick a set $S \subseteq C^{-ab}$ uniformly at random (i.e., each of the $2^{m - 2}$ sets with equal probability). 
Formally, let $\mathcal{P}(C^{-ab})$ denote the power set of $C^{-ab}$, and we will sample $S \sim \Unif(\mathcal{P}(C^{-ab}))$. Then choose a uniformly random ranking $\tau^S$ of the candidates in $S$ and a uniformly random ranking $\tau^{\overline{S}}$ of the candidates in $\overline{S} := C^{-ab} \setminus S$. Finally, if $|S|$ is even, output $\tau^S \succ a \succ b \succ \tau^{\overline{S}}$ and if $|S|$ is odd, output $\tau^S \succ b \succ a \succ \tau^{\overline{S}}$. This process is visually represented in \Cref{fig:profile-a}.

\begin{figure}[t]
     \begin{center}
     \begin{forest}
            for tree={
            grow=east, 
            parent anchor=east,
            child anchor=west, 
            rounded corners, draw,
            align=center,
            edge={->},
            s sep = 9ex
            }
            [{Choose $S\sim  \Unif(\mathcal{P}(C^{-ab}))$},l sep = 4ex,anchor =east
            [{Choose $\tau^S \sim \Unif(\mathcal{L}(S))$\\ and $\tau^{\overline{S}} \sim \Unif(\mathcal{L}$$(\overline{S}))$},l sep = 16ex,anchor =west
            [
                $\tau^S \succ a \succ b \succ \tau^{\overline{S}}$,edge label={node[midway,sloped,above]{$|S|$ is even}}
            ]
            [
                $\tau^S \succ b \succ a \succ \tau^{\overline{S}}$,edge label={node[midway,sloped,above]{$|S|$ is odd}}
            ]
            ]]
        \end{forest}
    \end{center}
    \caption{A process inducing the distribution over rankings for $\bsigma$.}\label{fig:profile-a}
    \end{figure}
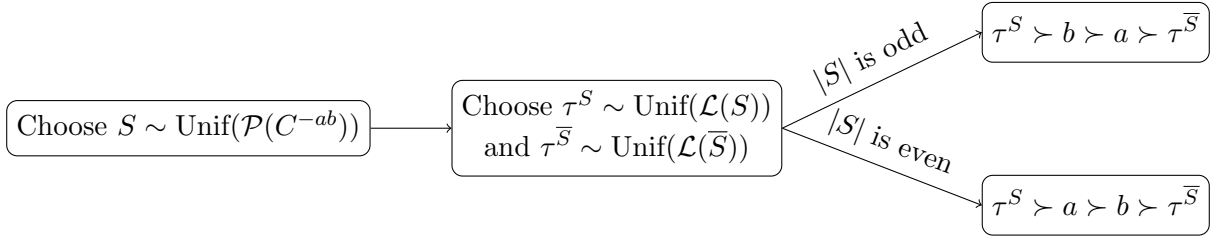

    First, observe that it is indeed the case that $\plu_{\bsigma}(a) \ne \plu_{\bsigma}(b)$. The only way that either $a$ or $b$ could be ranked first is if the set $S$ is empty. In that case, $|S|$ is even, so $a$ will be ranked first. This does happen with positive probability ($1/2^{m - 2}$), so the plurality score of $a$ is positive. However, this can never happen for candidate $b$.
    
    

    Next, we show that $\bsigma$ and $\bsigma^\atob$ are $(m-1)$-indistinguishable. Note that a ranking from $\bsigma^\atob$ can be sampled by running the process for $\bsigma$ but swapping the outcomes of the ``$|S|$ is even'' and ``$|S|$ is odd'' branches.
    Fix any $Q \subseteq C$ with $|Q| = m-1$ and fix some ranking $\tau \in \mathcal{L}(Q)$. We want to show that observing $\tau$ is equiprobable under both $\bsigma|_Q$ and $\bsigma^\atob|_Q$.
    
    First, suppose $Q$ contains only one of $a$ or $b$. Without loss of generality, suppose it contains $a$. Since $Q$ does not contain $b$, if we run the process of \Cref{fig:profile-a} and pick $\tau^S$ and $\tau^{\overline{S}}$, regardless of whether we follow the ``$|S|$ is even'' or ``$|S|$ is odd'' branch the output when restricted to $Q$ is the same. This is due to $\bsigma$ and $\bsigma^\atob$ differing only in which branch to follow, and both branches are identical apart from the ordering of $a$ and $b$, which occur consecutively in both. Thus, outputting $\tau$ is equiprobable in both restricted profiles.
  
    Next, suppose $Q$ contains both $a$ and $b$. Again, since in any ranking of $\bsigma$ or $\bsigma^\atob$, $a$ and $b$ always appear adjacent to each other, if $a$ and $b$ are not adjacent in $\tau$, it occurs with probability $0$ in both profiles. As such, suppose they are adjacent in $\tau$, and without loss of generality, let $a \succ_{\tau} b$ (the other case is symmetric). Let $T$ be the set of candidates ranked above $a$ in $\tau$, and $L$ be the set of candidates ranked below $b$. Therefore, $Q = T \cup L \cup \set{a, b}$. Note to sample $\sigma$ with $\sigma|_Q = \tau$ under either $\bsigma$ or $\bsigma^\atob$, it must be the case that when selecting $S$, $S \cap Q = T$. Further, conditioned on this, the probability of getting both $T$ and $L$ in the order matching $\tau$ is simply $\frac{1}{|T|!|L|!}$. Finally, the order of $a$ and $b$ will match $\tau$ exactly when $|S|$ is even. Putting this together, we have that the 
    probability of sampling $\sigma$ with $\sigma|_Q = \tau$ under $\bsigma$ is exactly
    \[\frac{1}{|T|!|L|!} \Pr[(S \cap Q = T) \land (|S| \text{ is even})].
    \]
    For $\bsigma^\atob$, it is identical but with ``even'' switched with ``odd.'' Hence, to show equality, it suffices to show that:
    \[
        \Pr[(S \cap Q = T) \land (|S| \text{ is even})] = \Pr[(S \cap Q = T) \land (|S| \text{ is odd})].
    \]
    This is equivalent to showing that
    \[
        \Pr[|S| \text{ is even} \,|\, S \cap Q = T] = \Pr[|S| \text{ is odd} \,|\, S \cap Q = T].
    \]
    Let us now consider how to sample $S$ from the conditional distribution given $S \cap Q = T$. Note that $S$ satisfies this exactly when $T \subseteq S$ and $L \cap S \ne \emptyset$. Hence, to sample such an $S$, we can sample $S'$ uniformly from $C^{-ab} \setminus (T \cup L)$ and output  $S' \cup T$. By the assumptions that $|Q| = m - 1 < |C|$ and $\{a, b\} \subseteq Q$, we have that $C^{-ab} \setminus (T \cup L)$ is nonempty. It is known that when sampling a subset uniformly at random from a non-empty set, it is equiprobable whether it is of even or odd carnality.\footnote{Fix an element $x$. For any subset $S$ of the remaining elements, it is equally likely to pick $S$ and $S \cup \set{x}$. One of these has even parity and the other has odd.}  Hence, $|S'|$ is equally likely to be even or odd, which implies that  $|S|$ conditioned on $ S \cap Q = T$ is equally likely to be even or odd, as needed. 
\end{proof}

\section{Uncomputable Voting Rules}\label{secScoringRules}

In this section, we prove that there is a family of voting rules that \emph{cannot} be computed using limited query sizes. In particular, we give a complete characterization of which positional scoring rules can be computed in our query model for each choice of $t$. In addition, we analyze the widely adopted STV rule. We begin, however, with two lemmas, which together give sufficient conditions for a scoring rule to \emph{not} be computable using limited queries. The  second will also be helpful for the STV impossibility.

\begin{lemma}\label{lemma:m_indist}
    Fix a vector $\balpha$, and suppose there exists a profile $\bsigma$ and two candidates $a$ and $b$ such that $\score^\balpha_{\bsigma}(a) \neq \score^\balpha_{\bsigma}(b)$ yet $\bsigma$ and $\bsigma^\atob$ are $t$-indistinguishable. Then, there exists a family of profiles, $\{\bsigma^c\}_{c \in C}$ such that all are $t$-indistinguishable from one another, but each candidate $c \in C$ uniquely maximizes $\score^\balpha_{\bsigma^c}$.
\end{lemma}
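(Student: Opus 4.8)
The plan is to convert the single indistinguishable pair $(\bsigma, \bsigma^\atob)$ into an additive ``invisible perturbation'' gadget that I can transport to any pair of candidates and then superimpose to single out any desired winner. Assume without loss of generality that $\score^\balpha_\bsigma(a) > \score^\balpha_\bsigma(b)$, and set $\Delta := \score^\balpha_\bsigma(a) - \score^\balpha_\bsigma(b) > 0$. I would work with signed measures on $\mathcal{L}(C)$, on which both the restriction map $\mu \mapsto \mu|_Q$ and the score $\score^\balpha_\mu(c) := \sum_\sigma \mu(\sigma)\,\alpha_{\sigma^{-1}(c)}$ are linear, and consider $D := \bsigma - \bsigma^\atob$. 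Since $\bsigma$ and $\bsigma^\atob$ are $t$-indistinguishable, $D|_Q = 0$ for every $Q$ with $|Q| \le t$; and since applying the transposition $\pi^{ab}$ merely swaps the scores of $a$ and $b$ while fixing all other scores, the score of $D$ is $+\Delta$ on $a$, $-\Delta$ on $b$, and $0$ on every other candidate. Thus $D$ is a mass-zero signed measure, invisible to all $t$-queries, whose only effect on $\balpha$-scores is to move $\Delta$ units of score from $b$ onto $a$.

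Next I would transport this gadget to arbitrary candidate pairs. For distinct $x, y \in C$, pick any permutation $\pi$ with $\pi(a) = x$ and $\pi(b) = y$, and set $D_{xy} := \pi \circ \bsigma - \pi \circ \bsigma^\atob$. Permutations preserve $t$-indistinguishability, because $(\pi \circ \sigma)|_Q$ is just a relabeling of $\sigma|_{\pi^{-1}(Q)}$, so $(\pi \circ \bsigma)|_Q$ depends only on $\bsigma|_{\pi^{-1}(Q)}$; hence $D_{xy}|_Q = 0$ for all $|Q| \le t$. Moreover, $\score^\balpha_{\pi \circ \bsigma}(z) = \score^\balpha_\bsigma(\pi^{-1}(z))$, so the score of $D_{xy}$ on $z$ equals the score of $D$ on $\pi^{-1}(z)$, which is $+\Delta$ on $x$, $-\Delta$ on $y$, and $0$ elsewhere. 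For a target candidate $c$, I then take the superposition $D^c := \sum_{y \ne c} D_{cy}$, which remains invisible to $t$-queries and mass-zero, and whose net score effect is $+(m-1)\Delta$ on $c$ and exactly $-\Delta$ on each other candidate.

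Finally I would add these perturbations to a symmetric base. Let $\bsigma_0 := \Unif(\mathcal{L}(C))$, which by symmetry assigns every candidate the same $\balpha$-score $K$. Because each $D^c$ has total mass zero and $\bsigma_0$ places positive mass $1/m!$ on every ranking, there is a single $\epsilon > 0$ (using that there are finitely many candidates and rankings) for which $\bsigma^c := \bsigma_0 + \epsilon D^c$ is a genuine probability distribution for every $c$. Invisibility of $D^c$ gives $\bsigma^c|_Q = \bsigma_0|_Q$ for all $|Q| \le t$, so the profiles $\{\bsigma^c\}_{c \in C}$ are pairwise $t$-indistinguishable; and linearity of the score yields $\score^\balpha_{\bsigma^c}(c) = K + (m-1)\epsilon\Delta$ while $\score^\balpha_{\bsigma^c}(z) = K - \epsilon\Delta$ for every $z \ne c$, so $c$ is the unique $\balpha$-score maximizer on $\bsigma^c$, exactly as required.

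The step I expect to require the most care is the transport-and-superpose argument: verifying cleanly that permutations preserve $t$-indistinguishability and permute the score vector via $\score^\balpha_{\pi \circ \bsigma}(z) = \score^\balpha_\bsigma(\pi^{-1}(z))$ (so each $D_{xy}$ does precisely ``$+\Delta$ to $x$, $-\Delta$ to $y$, $0$ else''), and then confirming that the sum $D^c$ yields the claimed strict separation with no accidental cancellation on the other candidates. Everything else — mass-zero, invisibility under restriction, and selecting a common $\epsilon$ ensuring nonnegativity — follows from linearity and the strict positivity of the uniform base $\bsigma_0$.
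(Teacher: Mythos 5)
Your proof is correct, and it is essentially the paper's construction expressed in a different formalism. The paper defines $\bsigma^c$ as a probability mixture: draw a uniformly random permutation $\pi$ and sample from $\pi \circ \bsigma^\atob$ if $\pi(b) = c$, and from $\pi \circ \bsigma$ otherwise; unwinding the algebra, this is exactly your decomposition $\Unif(\mathcal{L}(C)) + \epsilon \sum_{x \neq c} D_{cx}$ with the forced value $\epsilon = \frac{1}{m(m-1)}$, except that each transported gadget $D_{cx}$ is averaged over all $(m-2)!$ permutations carrying $(a,b)$ to $(c,x)$ rather than being a single chosen representative. Both arguments hinge on precisely the two facts you isolate: relabeling candidates preserves $t$-indistinguishability (since $(\pi\circ\bsigma)|_Q$ is determined by $\bsigma|_{\pi^{-1}(Q)}$), and $\score^\balpha_{\pi\circ\bsigma}(z) = \score^\balpha_{\bsigma}(\pi^{-1}(z))$. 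The trade-off is presentational: the paper's convex-mixture packaging makes nonnegativity automatic (no $\epsilon$ needed, since $\bsigma^c$ is a mixture of genuine profiles by construction), whereas your signed-measure formulation requires the (easy but necessary) check that a single $\epsilon > 0$ keeps every $\bsigma^c$ a valid distribution, in exchange for more transparent score bookkeeping --- the perturbation contributes exactly $+(m-1)\Delta$ to $c$ and $-\Delta$ to every other candidate, with no case analysis over permutations.
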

\begin{proof}
Let $\bsigma$, $a$, and $b$ be the profile and candidates satisfying the lemma conditions.
     Fix a candidate $c$. We describe the distribution $\bsigma^c$ as follows. Assume without loss of generality that $\score^{\balpha}_{\bsigma}(a) > \score^{\balpha}_{\bsigma}(b)$. We first sample a permutation over the candidates $\pi$ uniformly at random. If $\pi(b) = c$, we return a ranking sampled from $\pi \circ \bsigma^\atob$; otherwise, we return a ranking sampled from $\pi \circ \bsigma$. A visual representation can be found in \Cref{fig:profile-mix}. 
    We will compare this constructed profile with respect to another, which we call $\bsigma^{\unif}$. The profile $\bsigma^{\unif}$ is constructed by picking a permutation $\pi$ uniformly at random and returning a sample $\sigma \sim \pi \circ \bsigma$ regardless of $\pi$, shown in \Cref{fig:profile-unif}.
    
    \begin{figure}[t]
     \begin{center}
     \begin{forest}
            for tree={
            grow=east, 
            parent anchor=east,
            child anchor=west, 
            rounded corners, draw,
            align=center,
            edge={->},
            s sep = 8ex
            }
            [{Choose permutation $\pi$\\uniformly at random},l sep = 18ex,
            [
                Sample $\sigma \sim \pi \circ \bsigma$,edge label={node[midway,sloped,above]{$\pi(b) \ne c$}}
            ]
            [
                Sample $\sigma \sim \pi \circ \bsigma^\atob$,edge label={node[midway,sloped,above]{$\pi(b) = c$}}
            ]
            ]
        \end{forest}
    \end{center}
    \caption{A process inducing the distribution over rankings for $\bsigma^c$.}\label{fig:profile-mix}
    \end{figure}
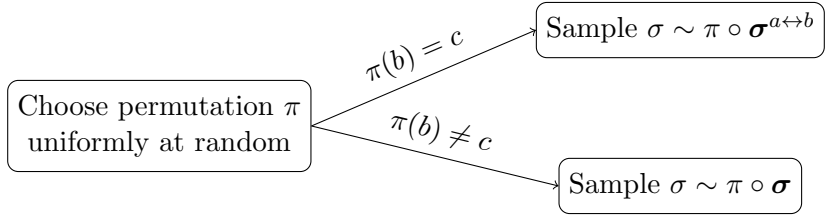
    \begin{figure}[t]
     \begin{center}
     \begin{forest}
            for tree={
            grow=east, 
            parent anchor=east,
            child anchor=west, 
            rounded corners, draw,
            align=center,
            edge={->},
            s sep = 8ex
            }
            [{Choose permutation $\pi$\\uniformly at random},l sep = 10ex,anchor=east
            [
                {Sample $\sigma \sim \pi \circ \bsigma$},anchor=west
            ]
            ]
        \end{forest}
    \end{center}
    \caption{A process inducing the distribution over rankings for $\bsigma^{\unif}$.}\label{fig:profile-unif}
    \end{figure}
    
    We first claim that $\bsigma^{\unif}$ is in fact the uniform distribution over $\mathcal{L}(C)$. Indeed, an equivalent way of sampling from $\bsigma^{\unif}$ is to sample in the reverse order, first sampling $\sigma \sim \bsigma$ and then applying a randomly selected permutation $\pi$ to $\sigma$. This makes it clear that $\bsigma^{\unif}$ is a mixture over uniform distributions and is hence uniform.
    
    We can think of both $\bsigma^c$ and $\bsigma^{\unif}$ as a mixture of $m!$ different profiles, the one associated with each choice of $\pi$. Abusing notation slightly, we will write $\bsigma^c_\pi$ and $\bsigma^{\unif}_\pi$ for the profile sampled when we had the permutation $\pi$. More concretely, 
    \[
        \bsigma^c_\pi = \begin{cases}
            \pi \circ \bsigma & \text{ if } \pi(b) \ne c\\
            \pi \circ \bsigma^\atob & \text{ if } \pi(b) = c
        \end{cases},
    \]
    while $\bsigma^{\unif}_\pi = \pi \circ  \bsigma$ for all $\pi$.
    
    Note that the scores $\score^{\balpha}_{\bsigma^{\unif}}(c')$ are equal for all $c'$ by symmetry. We will show both (i) $\score^{\balpha}_{\bsigma^c}(c) > \score^{\balpha}_{\bsigma^{\unif}}(c)$ while $\score^{\balpha}_{\bsigma^c}(c') \le \score^{\balpha}_{\bsigma^{\unif}}(c')$ for all $c' \ne c$ and (ii) $\bsigma^c$ is $t$-indistinguishable from $\bsigma^{\unif}$. The first shows that the $c$ is the unique score maximizer, and the second shows that all the constructed profiles are $t$-indistinguishable from each other since each is $t$-indistinguishable from $\bsigma^{\unif}$.
   
    For the first, consider the difference $\score^{\balpha}_{\bsigma^c}(c') - \score^{\balpha}_{\bsigma^{\unif}}(c')$ for an arbitrary candidate $c'$. Because scores are linear, we can split them across our mixture definitions to get
    \[
        \score^{\balpha}_{\bsigma^c}(c') = \E_{\pi}[\score^{\balpha}_{\bsigma^c_\pi}(c')] \text{ and }  \score^{\balpha}_{\bsigma^{\unif}}(c') = \E_{\pi}[\score^{\balpha}_{\bsigma^{\unif}_\pi}(c')].
    \]
    Plugging this into the difference, by linearity of expectation, we have that
    \[
        \score^{\balpha}_{\bsigma^c}(c') - \score^{\balpha}_{\bsigma^{\unif}}(c') = \E_{\pi}[\score^{\balpha}_{\bsigma^c_\pi}(c') -  \score^{\balpha}_{\bsigma^{\unif}_\pi}(c')].
    \]
    Now, for any $\pi$ with $\pi(b) \ne c$, the difference inside the expectation is $0$ because $\bsigma^c_\pi = \bsigma^{\unif}_\pi$. Fix some $\pi$ such that $\pi(b) = c$. In this case, $\bsigma^c_\pi = \pi \circ \bsigma^{\atob}$ while $\bsigma^{\unif}_\pi = \pi \circ \bsigma$. Note that \[\score^{\balpha}_{\pi \circ \bsigma^{\atob}}(c') = \score^{\balpha}_{\bsigma^{\atob}}(\pi^{-1}(c')) = \score^{\balpha}_{\bsigma}(\pi^{ab}(\pi^{-1}(c'))) \]
    where the first equality moves the application of $\pi$ and the second does the same, using the fact that $\bsigma^\atob$ is simply achieved by the $\pi^{ab}$ permutation (which is its own inverse). We also have:
    \[\score^{\balpha}_{\pi \circ \bsigma}(c') =  \score^{\balpha}_{\bsigma}(\pi^{-1}(c)).\]
    Hence, this difference is only nonzero if $\pi^{ab}(\pi^{-1}(c')) \ne\pi^{-1}(c')$, i.e., $\pi^{-1}(c') \in \set{a, b}$. If $\pi^{-1}(c') = a$, then the difference is $\score^{\balpha}_{\bsigma}(b) - \score^{\balpha}_{\bsigma}(a)$, while if $\pi^{-1}(c') = b$, then the difference is $\score^{\balpha}_{\bsigma}(a) - \score^{\balpha}_{\bsigma}(b)$. By the lemma assumptions, $\score^{\balpha}_{\bsigma}(a) > \score^{\balpha}_{\bsigma}(b)$; hence, the difference is positive in the first case and negative in the second. To summarize, $\score^{\balpha}_{\bsigma^c_\pi}(c') -  \score^{\balpha}_{\bsigma^{\unif}_\pi}(c')$ is nonzero only when $\pi(b) = c$ and either $\pi(b) = c'$ (in which case it is positive) or $\pi(a) = c'$ (in which case it is negative). From this description, we see that when $c' = c$, this can \emph{only} take on positive values (and does whenever $\pi(b) = c$), and when $c' \ne c$, this can \emph{only} take on negative values (and does whenever $\pi(b) = c$ and $\pi(a) = c'$). Hence, $\score^{\balpha}_{\bsigma^c}(c') - \score^{\balpha}_{\bsigma^{\unif}}(c')$ is positive when $c'=c$ and negative when $c' \ne c$, as needed.
    
    To complete the proof, we show that $\bsigma^c$ and $\bsigma^\unif$ are $t$-indistinguishable. To that end, fix $Q \subseteq C$ of size $t$. The key fact we will use is that if $\bsigma^1$ and $\bsigma^2$ are $t$-indistinguishable, then $\pi \circ \bsigma^1$ and $\pi \circ \bsigma^2$ are $t$-indistinguishable for all $\pi$. This implies that $(\pi \circ \bsigma^\atob)|_Q = (\pi \circ \bsigma)|_Q$ for all $\pi$. Therefore, both $\bsigma^c|_Q$ and $\bsigma^\unif|_Q$ are mixtures over the exact same $m!$ distributions, and are hence equal, as needed.
\end{proof}

\begin{lemma}\label{lem2ToM}
    Fix a voting rule $f$ and suppose there are $m$ profiles that are all $t$-indistinguishable, but each has a distinct singleton $f$-winner. Then, for all (possibly randomized) algorithms $A$ which on input profile $\bsigma$ can make queries of size at most $t$ to $\bsigma$ and output a candidate, there is a profile $\bsigma^*$ with a unique $f$-winning candidate $c$, such that the probability $A$ outputs $c$ on $\bsigma^*$ is at most $1/m$.
\end{lemma}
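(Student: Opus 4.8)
The plan is to exploit the fact that a $t$-query algorithm simply cannot tell the $m$ given profiles apart, so its output distribution must be identical on all of them; a pigeonhole argument then finishes the job. First I would relabel the profiles as $\{\bsigma^c\}_{c \in C}$, where $\bsigma^c$ denotes the profile among the $m$ given ones whose singleton $f$-winner is $c$. This relabeling is legitimate precisely because there are $m$ profiles with $m$ \emph{distinct} singleton winners among exactly $m$ candidates, so the assignment $c \mapsto \bsigma^c$ is a bijection, and in particular every candidate arises as the winner of exactly one profile.

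The crux is to formalize that $A$ behaves identically across all $\bsigma^c$. I would fix the internal randomness of $A$ (a random string $r$) and argue by induction on the number of queries issued so far that, for each fixed $r$, the sequence of queries made and responses received is the same no matter which $\bsigma^c$ is the input. The base case is immediate, and the inductive step uses that $A$'s next query is a deterministic function of $r$ together with the transcript so far (identical across profiles by the inductive hypothesis), while the response to any query $Q$ with $|Q| \le t$ is $\bsigma^c|_Q$, which is independent of $c$ by $t$-indistinguishability. Consequently, for each fixed $r$ the candidate output by $A$ is the same across all profiles; averaging over $r$ shows that the output distribution of $A$ is identical on every $\bsigma^c$.

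With this established, let $p_c := \Pr[A \text{ outputs } c]$ denote this common output probability, which is well-defined and independent of which profile is fed to $A$. Since $A$ outputs a single candidate, $\sum_{c \in C} p_c \le 1$, so by averaging there exists some $c^* \in C$ with $p_{c^*} \le 1/m$. Taking $\bsigma^* := \bsigma^{c^*}$, whose unique $f$-winner is $c^*$, the probability that $A$ outputs the winning candidate $c^*$ on input $\bsigma^*$ equals $p_{c^*} \le 1/m$, which is exactly the claimed bound.

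The main obstacle is the middle step: making the ``identical behavior'' claim airtight for adaptive, randomized algorithms, where naively both the query sequence and the output could depend on which profile is the true input. The coupling-by-shared-randomness together with the induction on query depth handles this cleanly, reducing the entire argument to the single observation that each individual query response is profile-independent under $t$-indistinguishability.
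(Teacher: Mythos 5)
Your proof is correct and follows essentially the same approach as the paper's: indistinguishability forces the algorithm's output distribution to be identical across all $m$ profiles, and a pigeonhole argument then yields a winner output with probability at most $1/m$. The only difference is that you carefully formalize the ``identical behavior'' step via a coupling of the internal randomness and induction on the query transcript, which the paper states informally in a single sentence.
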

\begin{proof}
    Let $\{\bsigma^c\}_{c \in C}$ denote the set of $m$ profiles that are all $t$-indistinguishable, and let $c$ be the $f$-winner on profile $\bsigma^c$. Note that an algorithm run on any of these profiles will receive the exact same responses to queries. Hence, its output must be identical for all of them. There must be some candidate $a^*$ which it outputs with probability at most $1/m$, and hence, $\bsigma^{a^*}$ satisfies the desired properties.
\end{proof}

Taken together, these two lemmas outline sufficient conditions wherein limited query algorithms cannot compute the winner. Combined with the construction presented in \Cref{lem:construction}, it allows us to immediately conclude the following result about the impossibility of computing a plurality winner with any restricted query size.

\begin{theorem}\label{thmPluralityCounter}
    For any number of candidates $m \geq 2$, for all $t < m$, no randomized $t$-query algorithm can always output a plurality winner with probability more than $1/m$.
\end{theorem}
\begin{proof}
   In \Cref{lem:construction}, we proved the existence of a profile $\bsigma$ that has distinct plurality scores for two candidates $a, b$, and is $m-1$ indistinguishable from its transposed profile $\bsigma^\atob$. As such, we can apply \Cref{lemma:m_indist} and show the existence of $m$ profiles with distinct plurality winners that are all $m-1$ indistinguishable. Applying \Cref{lem2ToM} on these $m$ profiles directly gives us the desired result for plurality.
\end{proof}

While the result for plurality follows immediately, the question of computing arbitrary scoring rules with a limited query size requires a more involved approach. This is tackled next.

\subsection{Characterization of scoring rules}\label{subsec:characterization}
We now consider an arbitrary scoring vector $\balpha$ and determine the exact query size needed to compute an $\balpha$-winner. Fix $1 \leq t \leq m$, and fix a candidate set $C$ of size $m$. The following notation will be convenient for discussing arbitrary positional scoring rules. For any preference profile $\bsigma \in \profiledist(C)$ and candidate $c \in C$, we define $\pos(\bsigma, c) \in \rr^m$ to be the vector of positional occurrences of candidate $c$ across all rankings in profile $\bsigma$. More formally, for each $j \in [m]$,
$$\pos(\bsigma, c)_j := \Pr_{\sigma \sim \bsigma}[\sigma(j) = c].$$
Thus, a positional scoring rule is a voting rule parameterized by scoring vector $\balpha$ which selects a candidate $c$ maximizing $\score^\balpha_{\bsigma}(c) = \balpha \cdot \pos(\bsigma, c)$. Now for each $k \in [t]$, consider the scoring vector $\balpha^k \in \rr^m$ given by:
\begin{equation*}
    \alpha^k_j = \binom{j - 1}{k - 1}\binom{m - j}{t - k}.
\end{equation*}
We define the subspace spanned by these $k$ vectors as $R_{m, t}$. Formally,
$$R_{m, t} := \spn(\balpha^1, \balpha^2, \dots, \balpha^t) \subseteq \rr^m.$$


To build some intuition for this space, it can be shown that for $t \geq 2$, $R_{m,t}$ contains the commonly used Borda score, corresponding to $\balpha = (m-1, m-2, \dots, 0)$ (see \Cref{figTetrahedron} for a visualization). We next show that any scoring rule in this space can be computed with $t$-sized queries and give a constructive algorithm. This is essentially shown in Theorem~1 of \citet{bentert2020comparing} but under a different model. For completeness, we present the proof for our setting below:

\begin{lemma}[Theorem 1 of \citet{bentert2020comparing}]\label{lemRmtClosedForm}
For any $t \le m$ and $\balpha \in R_{m, t}$, given $\bsigma \in \Pi(C)$ and a candidate $c \in C$, it is possible to compute $\score^{\balpha}_\bsigma(c)$ with queries of size $t$.
\end{lemma}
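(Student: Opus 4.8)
The plan is to exploit the linearity of positional scores in the scoring vector. Since $\balpha \in R_{m,t} = \spn(\balpha^1,\dots,\balpha^t)$, we can write $\balpha = \sum_{k=1}^t \lambda_k \balpha^k$ for some coefficients $\lambda_1, \dots, \lambda_t$ that depend only on $\balpha$ (not on $\bsigma$) and are obtained by solving a fixed linear system. Because $\score^\balpha_\bsigma(c) = \balpha \cdot \pos(\bsigma,c)$ is linear in $\balpha$, we have $\score^\balpha_\bsigma(c) = \sum_{k=1}^t \lambda_k \score^{\balpha^k}_\bsigma(c)$. Thus it suffices to show that each basis score $\score^{\balpha^k}_\bsigma(c)$ is computable with queries of size $t$; the final answer is then just this fixed linear combination.

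The heart of the argument is a combinatorial reinterpretation of $\alpha^k_j$. First I would fix $k$ and observe that $\binom{j-1}{k-1}\binom{m-j}{t-k}$ counts exactly the $t$-subsets $Q \ni c$ for which $c$ occupies the $k$-th position within $Q$, assuming $c$ sits at overall position $j$: among the $j-1$ candidates ranked above $c$ we must place $k-1$ into $Q$, and among the $m-j$ ranked below we must place $t-k$. Formally, for any ranking $\sigma$ with $\sigma^{-1}(c)=j$,
$$\sum_{Q \,:\, c\in Q,\, |Q|=t} \mathbf{1}\!\left[(\sigma|_Q)^{-1}(c)=k\right] = \binom{j-1}{k-1}\binom{m-j}{t-k} = \alpha^k_j.$$

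Taking the expectation over $\sigma \sim \bsigma$ and exchanging the finite sum with the expectation then gives
$$\score^{\balpha^k}_\bsigma(c) = \E_{\sigma\sim\bsigma}\!\left[\alpha^k_{\sigma^{-1}(c)}\right] = \sum_{Q \,:\, c\in Q,\, |Q|=t} \Pr_{\sigma\sim\bsigma}\!\left[(\sigma|_Q)^{-1}(c)=k\right].$$
Each summand is a function of the query response $\bsigma|_Q$ alone—it is precisely the probability, under the distribution $\bsigma|_Q \in \Pi(Q)$ returned by the $t$-query $Q$, that $c$ is ranked $k$-th—so the algorithm can read it off directly. Hence the algorithm queries every $t$-subset $Q$ containing $c$ (there are $\binom{m-1}{t-1}$ of them), computes $\Pr_{\tau\sim\bsigma|_Q}[\tau^{-1}(c)=k]$ for each $k\in[t]$, sums over $Q$ to obtain $\score^{\balpha^k}_\bsigma(c)$, and finally returns $\sum_{k=1}^t \lambda_k \score^{\balpha^k}_\bsigma(c)$.

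The main obstacle is pinning down the combinatorial identity and verifying that summing the ``$c$ is $k$-th'' indicator over all $t$-subsets containing $c$ recovers $\alpha^k_j$ exactly; once this is established, the linearity of scores and the fact that each per-query term is literally a marginal of the $t$-query response make the remainder routine bookkeeping.
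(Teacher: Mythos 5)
Your proposal is correct and follows essentially the same route as the paper: reduce to the basis scores $\score^{\balpha^k}_\bsigma(c)$ by linearity, then use the combinatorial identity that $\binom{j-1}{k-1}\binom{m-j}{t-k}$ counts the $t$-subsets in which $c$ lands in position $k$, recovering each basis score by brute-force querying all $t$-subsets. The only cosmetic difference is that you express the aggregation as a direct sum of per-query marginals over the $\binom{m-1}{t-1}$ subsets containing $c$, whereas the paper phrases it as $\binom{m}{t}$ times the probability that $c$ is $k$-th in a uniformly random $t$-subset; these are identical computations.
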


\begin{proof}
We will show that for each $k \le t$, it is possible to compute $\score^{\balpha^k}_\bsigma(c)$. For any $\balpha \in R_{m, t}$, since $\balpha = \lambda_1 \balpha^1 + \cdots + \lambda_t \balpha^t$ for some scalars $\lambda_1, \ldots, \lambda_t$, by linearity, $\score^{\balpha}_\bsigma(c) = \lambda_1 \score^{\balpha^1}_\bsigma(c) + \cdots + \lambda_t \score^{\balpha^t}_\bsigma(c)$. Hence, as long as we can compute the score for each of the basis vectors, we can do so for any vector in the span.

    Fix a candidate $c$ and index $j \in [m]$, and let $\sigma$ be any ranking such that $\sigma(j) = c$. Since there are $m$ candidates in $C$, there are $\binom{m}{t}$ possible sets $S \subseteq C$ of size $t$. The number of such subsets $S$ for which the restricted ranking $\sigma|_S$ puts $c$ in a position $k$ is given by $\binom{j - 1}{k - 1}\binom{m - j}{t - k}$, since $S$ must contain the special candidate $c$, along with $k - 1$ of the $j - 1$ candidates ranked above $c$ in $\sigma$, and $t - k$ of the $m - j$ candidates ranked below $c$ in $\sigma$. Thus, the probability that $\sigma|_S(k) = c$ for a uniformly random subset $S$ of size $t$ is
    $$\frac{\binom{j - 1}{k - 1}\binom{m - j}{t - k}}{\binom{m}{t}}.$$
    
    Consider the following algorithm. For any input preference profile $\bsigma \in \profiledist(C)$ and candidate $c \in C$, we first compute the probability that, if we draw a set $S$ of $t$ distinct candidates uniformly at random from $C$ and draw a preference $\sigma \sim \bsigma$, candidate $c$ will be at position $k$ in the ranking $\sigma$ restricted to $S$ - i.e the event $\sigma|_{s}(k) = c$. Clearly, this probability can be determined with queries of size $t$ by brute-forcing over all subsets $S$ of size $t$. We then output this probability multiplied by $\binom{m}{t}$. Observe that we may write the output of this algorithm as
    \begin{align*}
        \binom{m}{t}\Pr_{\substack{S \subseteq C,\ \abs{S} = t\\\sigma \sim \bsigma}}[\sigma|_S(k) = c] &= \binom{m}{t}\sum_{j = 1}^t \Pr_{\sigma \sim \bsigma}[\sigma(j) = c] \Pr_{\substack{S \subseteq C,\ \abs{S} = t\\\sigma \sim \bsigma}}[\sigma|_S(k) = c \suchthat \sigma(j) = c]\\
        &= \binom{m}{t}\sum_{j = 1}^t \pos(\bsigma, c)_j \frac{\binom{j - 1}{k - 1}\binom{m - j}{t - k}}{\binom{m}{t}} \stext{by the calculation above}\\
        &= \sum_{j = 1}^t \balpha^k_j \pos(\bsigma, c)_j = \balpha^k \cdot \pos(\bsigma, c) = \score^{\balpha^k}_{\bsigma}(c).\qedhere
    \end{align*}
\end{proof}

We now move to our main result which generalizes \Cref{thmPluralityCounter} by proving that $R_{m,t}$is exactly the space of all scoring rules computable with limited queries of size $t$, thus giving a complete characterization. 

\begin{theorem}\label{thmCharacterizationPositional}
    For any number of candidates $m \geq 2$, any $1 \leq t \leq m$, and any vector $\balpha \in \rr^m$ 
    \begin{enumerate}
        \item\label{itmCharacterizationPositive} If $\balpha \in R_{m, t}$ then there a $t$-query algorithm that always outputs a candidate $c$ maximizing $\score^{\balpha}_{\bsigma}(c)$ on any input profile $\bsigma$.\item\label{itmCharacterizationNegative} If $\balpha \notin R_{m, t}$, then no randomized $t$-query algorithm can always output an $\balpha$-winner with probability more than $\frac1m$.
    \end{enumerate}
\end{theorem}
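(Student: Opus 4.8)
Part (1) is immediate from \Cref{lemRmtClosedForm}: if $\balpha \in R_{m, t}$, the algorithm computes $\score^{\balpha}_{\bsigma}(c)$ for every candidate $c$ using queries of size $t$ and returns any maximizer. The entire content is therefore in the impossibility (2). By \Cref{lemma:m_indist} and \Cref{lem2ToM}, it suffices to exhibit, for any $\balpha \notin R_{m, t}$, a single profile $\bsigma$ together with a pair $a, b$ such that $\score^{\balpha}_{\bsigma}(a) \neq \score^{\balpha}_{\bsigma}(b)$ while $\bsigma$ and $\bsigma^\atob$ are $t$-indistinguishable; the two lemmas then promote this to $m$ mutually $t$-indistinguishable profiles with distinct singleton winners and to the $1/m$ bound. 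So the plan is to build this one profile.

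First I would put $R_{m,t}$ into a usable form. Each basis vector satisfies $\alpha^k_j = \binom{j-1}{k-1}\binom{m-j}{t-k}$, a polynomial in $j$ of degree $(k-1)+(t-k) = t-1$; moreover $\alpha^k_j = 0$ for $j < k$ while $\alpha^k_k = \binom{m-k}{t-k} > 0$, so the matrix $[\alpha^k_j]_{j,k \in [t]}$ is triangular with nonzero diagonal and the $\balpha^k$ are linearly independent. Hence $R_{m,t}$ is exactly the $t$-dimensional space of vectors agreeing with a degree-$\le t-1$ polynomial in the position index, equivalently the kernel of the $t$-th finite-difference operator. Consequently $\balpha \notin R_{m,t}$ guarantees an index $i \in \{1, \dots, m-t\}$ with $\sum_{\ell = 0}^{t} (-1)^\ell \binom{t}{\ell} \alpha_{i + \ell} \neq 0$, and this nonvanishing difference is what the construction will detect.

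Next, the construction, which localizes the parity idea of \Cref{lem:construction} so that it picks out a $t$-th rather than an $(m-1)$-th difference. Partition $C \setminus \{a, b\}$ into three blocks $U, D, W$ with $|U| = i-1$, $|D| = t-1$, and $|W| = m - i - t$ (all nonnegative precisely because $1 \le i \le m-t$, and summing to $m-2$). Sample a ranking by writing, from top to bottom, a uniform order of $U$, then a uniform order of a uniformly random subset $S \subseteq D$, then $a, b$ (in this order if $|S|$ is even, reversed if $|S|$ is odd), then a uniform order of $D \setminus S$, then a uniform order of $W$. As in \Cref{lem:construction}, $\bsigma^\atob$ is the same process with the even/odd branches interchanged, $a$ and $b$ are always adjacent, and their positions depend only on $|S|$. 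A direct computation gives $\score^{\balpha}_{\bsigma}(a) - \score^{\balpha}_{\bsigma}(b) = \frac{1}{2^{t-1}} \sum_{s=0}^{t-1} (-1)^s \binom{t-1}{s} (\alpha_{i+s} - \alpha_{i+s+1})$, which collapses via Pascal's identity to $\frac{1}{2^{t-1}} \sum_{\ell=0}^{t} (-1)^\ell \binom{t}{\ell} \alpha_{i+\ell}$, nonzero by the choice of $i$. This is exactly the step that forces the hypothesis $\balpha \notin R_{m,t}$ into play.

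The remaining obligation, which I expect to be the main obstacle and the place that most closely mirrors \Cref{lem:construction}, is verifying $t$-indistinguishability. Fix $Q$ with $|Q| = t$. If $Q$ omits $a$ or $b$, the even and odd branches agree on $Q$ since they differ only in the adjacent pair $a, b$, exactly as before. If $\{a, b\} \subseteq Q$, then $Q$ meets $C \setminus \{a, b\}$ in only $t-2$ candidates, so since $|D| = t-1$ at least one element of $D$ lies outside $Q$; conditioning on the intersection $S \cap (Q \cap D)$ that a target restriction $\tau$ forces, this free element makes the parity of $|S|$ uniform, so $\tau$ is equiprobable under the two branches. The delicate points to confirm are that freezing $U$ above and $W$ below does not disturb the parity argument (the frozen blocks contribute deterministic above/below membership and affect only the order-matching factors, which are identical in both branches) and that a free element of $D$ always survives precisely because $t < m$. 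Once indistinguishability is checked, \Cref{lemma:m_indist} and \Cref{lem2ToM} close the argument.
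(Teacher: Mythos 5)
Your proposal is correct, and although it keeps the paper's overall scaffolding---part (1) via \Cref{lemRmtClosedForm}, and part (2) reduced through \Cref{lemma:m_indist} and \Cref{lem2ToM} to exhibiting a single profile in which $a$ and $b$ have unequal $\balpha$-scores yet which is $t$-indistinguishable from its transposition, using essentially the same profile family (the \Cref{lem:construction} gadget on a window of $t+1$ candidates at offset $i$, with the remaining candidates frozen above and below)---it proves the crucial existence step by a genuinely different argument. The paper's \Cref{lemCharacterizationForwardDirection} is non-constructive: assuming $a$ and $b$ tie in every $\bsigma^i$, it shows that $\spn(\balpha, \balpha^1, \dots, \balpha^t)$ (dimension $t+1$) and $\spn(s^1, \dots, s^{m-t})$ (dimension $m-t$) would be orthogonal subspaces of $\rr^m$, which is impossible; note this route also invokes \Cref{lemRmtClosedForm} in the \emph{negative} direction, to get orthogonality against each $\balpha^k$. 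You instead identify $R_{m,t}$ as the kernel of the $t$-th finite-difference operator (equivalently, scoring vectors that are polynomials of degree at most $t-1$ in the position index; your triangularity-plus-degree argument for this identification is sound), and then compute the score gap in the offset-$i$ profile exactly:
\[
\score^{\balpha}_{\bsigma^i}(a) - \score^{\balpha}_{\bsigma^i}(b) \;=\; \frac{1}{2^{t-1}} \sum_{\ell=0}^{t} (-1)^\ell \binom{t}{\ell}\,\alpha_{i+\ell},
\]
which I have checked (the pair occupies positions $i+|S|$ and $i+|S|+1$, $\Pr[|S|=s]=\binom{t-1}{s}2^{-(t-1)}$, and the Pascal collapse is right). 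Your route buys a constructive statement---it pinpoints which $i$ works and the size of the gap---plus a clean standalone characterization of the computable rules, and it removes the dependence on \Cref{lemRmtClosedForm} from the impossibility direction; the paper's dimension count avoids the polynomial identification and the explicit computation, and is shorter. Two small remarks: your indistinguishability check is correct as sketched, but you could black-box it by applying \Cref{lem:construction} to the inner candidate set $D \cup \set{a,b}$ of size $t+1$ (any $t$-query meets it in at most $t$ candidates) and observing that prepending and appending independently ordered blocks preserves indistinguishability, which is exactly how the paper argues; and the free element of $D$ survives because $a, b \in Q$ leave only $t-2$ slots for the $t-1$ elements of $D$, not ``because $t < m$'' as you parenthetically claim (when $t = m$, part (2) is vacuous since $R_{m,m} = \rr^m$).
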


\begin{figure}[t]
     \begin{center}
     \begin{forest}
            for tree={
            grow=east, 
            parent anchor=east,
            child anchor=west, 
            rounded corners, draw,
            align=center,
            edge={->},
            s sep = 8ex
            }
            [{Sample $\sigma \sim \bsigma$},l sep = 10ex,anchor=east
            [
                {Output $c^2_1 \succ \cdots c^2_{i - 1} \succ \sigma \succ c^2_i \succ \cdots \succ c^2_{|C^2|}$},anchor=west
            ]
            ]
        \end{forest}
    \end{center}
    \caption{A process inducing $\bsigma^i$. This is parameterized by two disjoint sets, $C^1$ and $C^2$ and two candidates $a, b \in C^1$. The profile $\bsigma \in \Pi(C^1)$ satisfies the conditions of \Cref{lem:construction} with $C_1, a,$ and $b$. The indexing $c^2_1 \succ \cdots \succ c^2_{|C^2|}$ is an arbitrary order of the candidates in $C^2$.}\label{fig:sigmai}
    \end{figure}
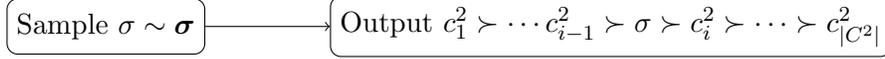

Statement (\ref{itmCharacterizationPositive}) is the easier part and follows immediately from \Cref{lemRmtClosedForm}. Our main focus here is statement (\ref{itmCharacterizationNegative}), which leverages the construction from \Cref{secConstruction}. We define a sequence of $m - t$ profiles as follows. We partition the candidate set $C$ into two disjoint pieces, a set $C_1$ of size $t + 1$ with two distinguished candidates $a, b \in C_1$, and a set $C_2$ of size $m - t - 1$. Let $\bsigma \in \profiledist(C_1)$ be a profile satisfying the conditions of \Cref{lem:construction} with $C_1$, $a$ and $b$, i.e., $\plu_\bsigma(a) \ne \plu_\bsigma(b)$ and $\bsigma$ and $\bsigma^{a \leftrightarrow b}$ are $t$-industinguishable. Let $c^2_1 \succ \cdots \succ c^2_{|C_2|}$ be an arbitrary order of the candidates in $C^2$. For each $i \in [m - t]$, we extend $\bsigma$ to a profile on all $m$ candidates $\bsigma^i \in \profiledist(C)$ by inserting it in the $C^2$ order after the first $i - 1$ candidates. A visual representation of this can be found in \Cref{fig:sigmai}. Note that, for each $i \in [m - t]$, we clearly have that $\bsigma^i$ is $t$-indistinguishable from $(\bsigma^i)^\atob$ since $\bsigma$ is $t$-indistinguishable from $\bsigma^\atob$, as for any $t$-query query $Q$, the candidates from $C_2$ are all in the same order, and $\bsigma|_{Q \cap C_1} = \bsigma^\atob|_{Q \cap C_1}$ because $|Q \cap C_1| \le t$. We now show for any vector not in the span, the score for $a$ and $b$ on one of these $t$-indistinguishable profiles is not the same.

\begin{lemma}\label{lemCharacterizationForwardDirection}
    For any $\balpha \notin R_{m, t}$, there exists some $i \in [m - t]$ such that $\score^{\balpha}_{\bsigma^i}(a) \neq \score^{\balpha}_{\bsigma^i}(b)$.
\end{lemma}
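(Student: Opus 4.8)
The plan is to reduce the claim to a statement about a linear subspace and then finish with a dimension count. First I would compute the score difference explicitly. In $\bsigma^i$ the block of $C_1$-candidates occupies the global positions $i, i+1, \dots, i+t$, so a candidate of rank $r$ within $\bsigma$ (a profile over the $t+1$ candidates of $C_1$) lands in global position $i-1+r$. Hence $\pos(\bsigma^i, a)_{i-1+r} = \pos(\bsigma, a)_r$ for $r \in [t+1]$ and all other coordinates vanish, and likewise for $b$. Writing $d_r := \pos(\bsigma,a)_r - \pos(\bsigma,b)_r$ and letting $v^i \in \rr^m$ be the vector with $v^i_{i-1+r} = d_r$ for $r \in [t+1]$ and zeros elsewhere, linearity of the score gives
\[ \score^{\balpha}_{\bsigma^i}(a) - \score^{\balpha}_{\bsigma^i}(b) = \sum_{r=1}^{t+1} \alpha_{i-1+r}\, d_r = \langle \balpha, v^i \rangle. \]
So the goal is exactly: if $\balpha \notin R_{m,t}$ then $\langle \balpha, v^i \rangle \ne 0$ for some $i \in [m-t]$. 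Equivalently, setting $W := \{\balpha : \langle \balpha, v^i\rangle = 0 \text{ for all } i \in [m-t]\}$, I want to prove the contrapositive $W \subseteq R_{m,t}$.

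Second, I would show $R_{m,t} \subseteq W$ using the positive result. For any $\balpha \in R_{m,t}$, \Cref{lemRmtClosedForm} says $\score^\balpha$ is computable from $t$-queries; since $\bsigma^i$ and $(\bsigma^i)^\atob$ are $t$-indistinguishable (as established just before the lemma), every candidate receives equal score in the two profiles, in particular $\score^\balpha_{\bsigma^i}(a) = \score^\balpha_{(\bsigma^i)^\atob}(a)$. Combining this with the identity $\score^\balpha_{(\bsigma^i)^\atob}(a) = \score^\balpha_{\bsigma^i}(b)$ (applying $\pi^{ab}$, which is its own inverse and swaps the roles of $a$ and $b$) yields $\langle \balpha, v^i\rangle = 0$ for all $i$, i.e. $\balpha \in W$.

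Third, I would close the argument by matching dimensions, which upgrades $R_{m,t} \subseteq W$ to equality and hence gives $W \subseteq R_{m,t}$. On one side, $\dim W = t$: the vectors $v^1, \dots, v^{m-t}$ are shifted copies of $(d_1, \dots, d_{t+1})$, and because $d_1 = \plu_\bsigma(a) - \plu_\bsigma(b) \ne 0$ (guaranteed by \Cref{lem:construction}), the leading nonzero entry of $v^i$ sits in column $i$; the $v^i$ are therefore in echelon form and linearly independent, so the $m-t$ constraints cut $\rr^m$ down to dimension $t$. On the other side, $\dim R_{m,t} = t$: each $\balpha^k$ is, as a function of the coordinate $j$, a degree-$(t-1)$ polynomial $p_k(j) = \binom{j-1}{k-1}\binom{m-j}{t-k}$, and since $p_{k'}(k) = 0$ for $k' > k$ while $p_k(k) \ne 0$, evaluating a vanishing combination successively at $j = 1, 2, \dots, t$ forces all coefficients to vanish. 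With $R_{m,t} \subseteq W$ and $\dim R_{m,t} = t = \dim W$, we conclude $R_{m,t} = W$; thus $\balpha \notin R_{m,t}$ implies $\balpha \notin W$, which produces the required index $i$.

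The main obstacle is not the computation but the dimension bookkeeping, and in particular the linear independence of the constraint vectors $v^i$, which is precisely where \emph{both} guarantees of \Cref{lem:construction} are needed: indistinguishability supplies the inclusion $R_{m,t} \subseteq W$, while the plurality separation $\plu_\bsigma(a) \ne \plu_\bsigma(b)$ supplies $d_1 \ne 0$, without which the echelon argument (and thus $\dim W = t$) collapses. Verifying $\dim R_{m,t} = t$ is the other delicate point, though the triangular evaluation structure of the $\balpha^k$ keeps it short.
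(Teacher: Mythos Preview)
Your proposal is correct and follows essentially the same approach as the paper: the paper defines $s^i := \pos(\bsigma^i,a)-\pos(\bsigma^i,b)$ (your $v^i$), observes that all $\balpha^k$ are orthogonal to every $s^i$ via $t$-indistinguishability, and then uses the identical two triangular-structure arguments (for the $\balpha^k$ and for the $s^i$, the latter driven by $d_1=\plu_\bsigma(a)-\plu_\bsigma(b)\ne 0$) to count dimensions. The only cosmetic difference is packaging: the paper argues by contradiction from $\dim\spn(\balpha,\balpha^1,\dots,\balpha^t)+\dim\spn(s^1,\dots,s^{m-t})\le m$, whereas you phrase it as $R_{m,t}\subseteq W$ plus $\dim R_{m,t}=\dim W=t$, but these are the same orthogonality/dimension count.
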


\begin{proof}
    For each $i \in [m - t]$, we define the vector
    $$s^i := \pos(\bsigma^i, a) - \pos(\bsigma^i, b) \in \rr^m.$$
    Suppose toward a contradiction that candidates $a$ and $b$ have the same scores in each $\bsigma^i$ according to $\balpha$. This implies that
    \[
        \balpha \cdot \pos(\bsigma^i, a) = \balpha \cdot \pos(\bsigma^i, b).
    \]
    
    Since the score under each basis vector  $\score^{\balpha^k}_{\bsigma^i}$ for $k \in [t]$ can be computed with queries of size $t$ by \Cref{lemRmtClosedForm} and each $\bsigma^i$ is $t$-indistinguishable from $(\bsigma^i)^\atob$, we know that $a$ and $b$ must have the same scores in $\score^{\balpha^k}_{\bsigma^i}$ for each $\bsigma^i$ and $k \in [t]$ as well. In other words, for each $0 \leq k \leq t$, and for each $i \in [m - t]$, we have
    $$\balpha^k \cdot \pos(\bsigma^i, a) = \balpha^k \cdot \pos(\bsigma^i, b).$$
    We can therefore equivalently write for all $\balpha' \in \set{\balpha, \balpha^1, \balpha^2, \dots, \balpha^t}$ and all $i \in [m - t]$,
    $$\balpha' \cdot s^i = 0.$$
    Consider the following two subspaces of $\rr^m$:
    \begin{align*}
        R &:= \spn(\balpha, \balpha^1, \balpha^2, \dots, \balpha^t)\\
        S &:= \spn(s^1, s^2, \dots, s^{m - t})
    \end{align*}
    What we have just shown is that the vectors generating $R$ are each orthogonal to the vectors generating $S$, so the spaces are orthogonal to each other. Therefore, $\dim(R) + \dim(S) \leq m$. To obtain a contradiction, we will show that $\dim(R) = t + 1$ and $\dim(S) = m - t$.

    First consider $R$. For each $k \in [t]$, observe that the $j\tth$ entry of $\balpha^k$ is zero for all $j < k$, since $\binom{j - 1}{k - 1} = 0$. On the other hand, the $k\tth$ entry of $\balpha^k$ is
    $$\binom{k - 1}{k - 1}\binom{m - k}{t - k} = \binom{m - k}{t - k} > 0$$
    because $k \leq t \leq m$. Thus, arranging the vectors $\balpha^1, \balpha^2, \dots, \balpha^t$ as the rows of a matrix, we have a triangle of zeros below nonzero diagonal entries of the form
    $$M =
    \begin{pmatrix}
        \balpha^1 \\
        \balpha^2 \\
        \vdots \\
        \balpha^t
    \end{pmatrix}
    =
    \begin{pmatrix}
        \balpha^1_1 > 0 & \balpha^1_2 & \cdots & \balpha^1_t \\
        0 & \balpha^2_2 > 0 & \cdots & \balpha^2_t \\
        \vdots & \vdots & \ddots & \vdots \\
        0 & 0 & \cdots & \balpha^t_t > 0
    \end{pmatrix}$$
    Clearly, $M$ has full rank, so the $\balpha^k$ vectors are all linearly independent. Furthermore, $\balpha$ is independent of all of these $t$ basis vectors, since we are assuming $\balpha \notin R_{m, t}$. Hence, the $t + 1$ vectors generating $R$ are independent, so $R$ has dimension $t + 1$.

    Now consider $S$. For each $i \in [m - t]$, there is no way that either of the two special candidates $a$ and $b$ could be ranked above position $i$ in any preference $\sigma$ in the support of $\bsigma^i$, since $a$ and $b$ belong to $C_1$, not $C_2$. Thus, for each $j < i$, we have
    $$s^i_j = \pos(\bsigma^i, a)_j - \pos(\bsigma^i, b)_j = 0 - 0 = 0.$$ 
    On the other hand, $a$ and $b$ occur exactly at position $i$ in $\bsigma^i$ with the same probability that they occur first in $\bsigma$ satisfying \Cref{lem:construction}. By \Cref{lem:construction}, the probability of $a$ occurring at the first position is different than the probability of $b$ occurring at the first position. In other words,
    $$s^i_i = \pos(\bsigma^i, a)_i - \pos(\bsigma^i, b)_i = \pos(\bsigma, a)_1 - \pos(\bsigma, b)_1 = \plu_{\bsigma}(a) - \plu_{\bsigma}(b) \neq 0.$$
    Thus, by the same triangle-of-zeros argument as before, we conclude that the $s^i$ vectors are independent, so $\dim(S) = m - t$.
\end{proof}

\begin{proof}[Proof of \Cref{thmCharacterizationPositional}]
    First suppose $\balpha \in R_{m, t}$. By \Cref{lemRmtClosedForm}, on input $\bsigma$, we can compute $\score^{\balpha}_\bsigma(c)$ for each candidate $c$ using queries of size $t$. Therefore, we can output a candidate maximizing this score.

    On the other hand, suppose $\balpha \notin R_{m, t}$. Then let $\bsigma^i$ be as in \Cref{lemCharacterizationForwardDirection}. Since $\score^{\balpha}_{\bsigma^i}(a) \neq \score^{\balpha}_{\bsigma^i}(b)$, yet $\bsigma^i$ is indistinguishable from $(\bsigma^i)^\atob$, the conclusion follows immediately from \Cref{lemma:m_indist,lem2ToM}.
\end{proof}

\begin{figure}
  \begin{minipage}[c]{0.61\textwidth}
    \includegraphics[width=\textwidth]{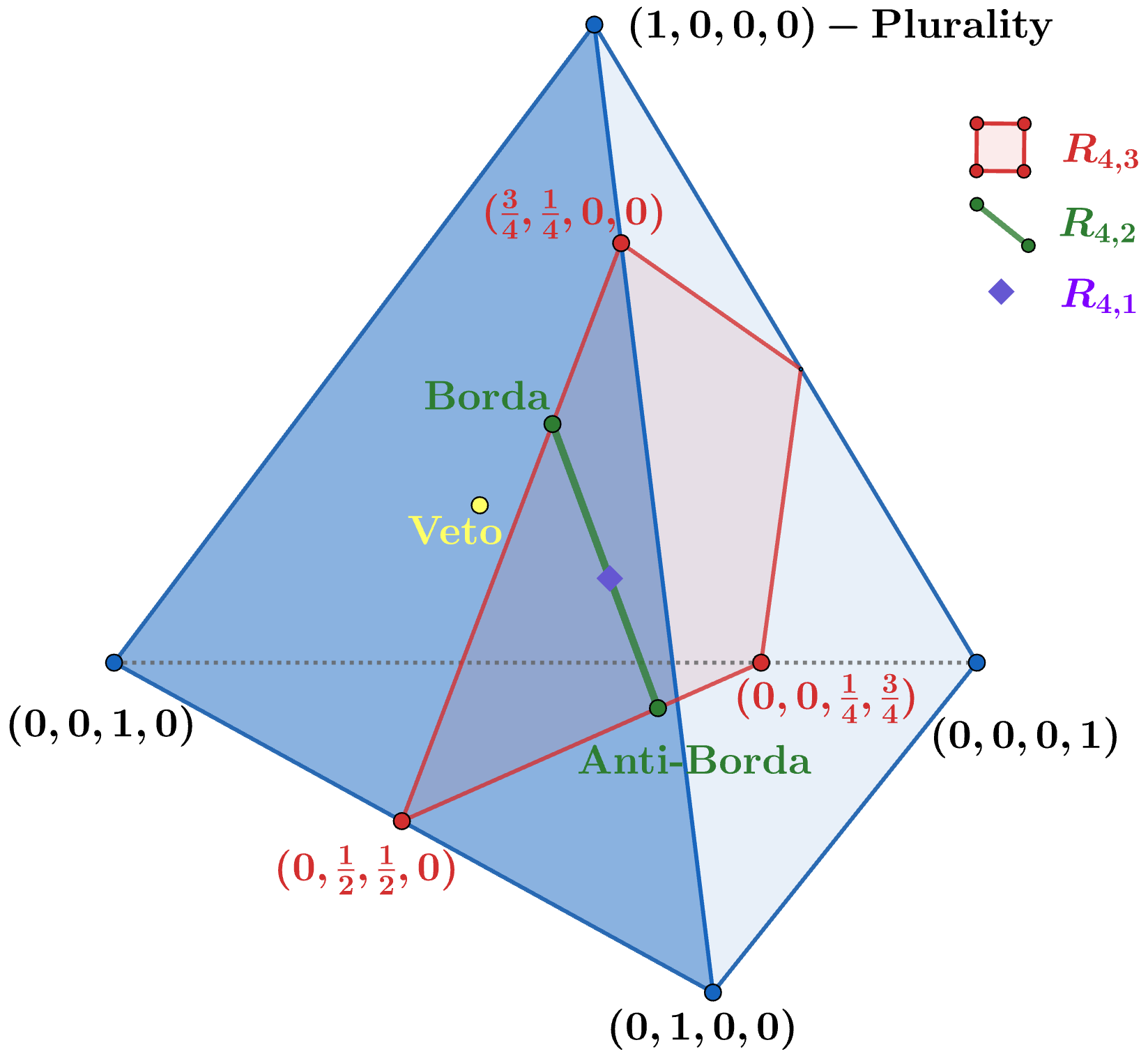}
  \end{minipage}\hfill
  \begin{minipage}[c]{0.35\textwidth}
    \caption{The space of positional scoring rules for $m = 4$ candidates. The corners represent rules that give all weight to a single position, with the top being Plurality. The red kite-shaped region is the 2-dimensional subspace $R_{4, 3}$ spanned by the vectors $\balpha^1 = (3, 1, 0, 0)$, $\balpha^2 = (0, 1, 1, 0)$, and $\balpha^3 = (0, 0, 1, 3)$, which are normalized in the simplex as the three red points at the corners of the kite. As we have shown, this subspace does not contain Plurality. Nested within this subspace is $R_{4, 2}$, the green 1-dimensional subspace spanned by the Borda and Anti-Borda scoring vectors. The purple point at the middle of this line is the trivial voting rule that gives every candidate the same score, which is the only element of the 0-dimensional subspace $R_{4, 1}$.}
    \label{figTetrahedron}
  \end{minipage}
\end{figure}



To make this space of computable scoring rules interpretable, we visualize the subspaces $R_{m, t}$ for $m = 4$. Since translating a scoring vector by a constant and scaling by a positive value do not affect the induced rule, all rules in $\mathbb{R}^4$ can be normalized such that they are contained within the 3-dimensional simplex (a tetrahedron). For instance, the scoring vector $(3, 2, 1, 0)$ for Borda is equivalent to $(\frac12, \frac13, \frac16, 0)$ while the one for veto $(0, 0, 0, -1)$ corresponds to $(\frac13, \frac13, \frac13, 0)$. \Cref{figTetrahedron} depicts this 3-simplex of scoring rules for $m = 4$ candidates and highlights its intersection with the subspaces $R_{4, 3}$, $R_{4, 2}$, and $R_{4, 1}$ along with other rules of interest.\footnote{Although as a subspace of $\mathbb{R}^m$, $R_{m,t}$ is $t$-dimensional, when we restrict to the simplex, we lose a dimension. Hence $R_{4,4}$ becomes 3-dimensional, $R_{4,3}$ becomes 2-dimensional and so on.} Note that $R_{4,4}$ corresponds to the whole simplex.

\subsection{Single Transferrable Vote}
Next, we consider the Single Transferrable Vote (STV) which cannot be parameterized by a scoring vector. Nonetheless, similar to plurality, we find a strong negative result about its computability with any limited-sized queries.

\begin{theorem}
    For any number of candidates $m \geq 2$, for all $t < m$, no randomized $t$-query algorithm can always output an STV winner with probability more than $\tfrac{1}{m}$. 
\end{theorem}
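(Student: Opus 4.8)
<br />

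The plan is to follow the exact same template that worked for plurality in \Cref{thmPluralityCounter}, namely to produce $m$ profiles that are pairwise $t$-indistinguishable but each has a distinct singleton STV winner, and then invoke \Cref{lem2ToM} to conclude the $1/m$ bound. The bridge lemmas are already in place: \Cref{lemma:m_indist} converts a single ``asymmetric but indistinguishable'' witness into the required family of $m$ profiles, and \Cref{lem2ToM} converts that family into the randomized lower bound. However, \Cref{lemma:m_indist} is stated specifically for scoring rules (it builds profiles where a chosen candidate uniquely \emph{maximizes} $\score^\balpha$), and STV is not a scoring rule, so I cannot apply it as a black box. The real work is therefore to establish the STV analogue of its hypothesis and to push the symmetrization argument through for STV.

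First I would find the STV analogue of the \Cref{lem:construction} witness: a profile $\bsigma$ on $C$ together with candidates $a,b$ such that $\bsigma$ and $\bsigma^\atob$ are $t$-indistinguishable, yet $a$ is an STV winner and $b$ is not (or more precisely, the STV outcomes for $a$ and $b$ differ). The natural starting point is the construction of \Cref{lem:construction} itself, which is $(m-1)$-indistinguishable and already breaks the symmetry between $a$ and $b$ at the plurality level: in that profile $a$ and $b$ always appear adjacent, and only $a$ (never $b$) can appear first. The key observation I would try to exploit is that STV is driven by plurality scores on restricted profiles, and since $a,b$ are always adjacent with $a$ ahead whenever the pair is in first place, $b$ should be eliminated at least as early as $a$ in some valid elimination sequence; I would aim to show that in $\bsigma$ there is a valid run of STV making $a$ the winner that has no counterpart making $b$ the winner, so $a$ and $b$ have genuinely different STV status.

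Next, with such a witness $(\bsigma, a, b)$ in hand, I would re-run the symmetrization idea from the proof of \Cref{lemma:m_indist} directly for STV rather than quoting its statement. Concretely, for each target candidate $c$ define $\bsigma^c$ by drawing a uniformly random permutation $\pi$ of $C$ and outputting $\pi \circ \bsigma^\atob$ when $\pi(b)=c$ and $\pi\circ\bsigma$ otherwise. The $t$-indistinguishability argument is identical and purely structural: it only uses that $\pi\circ\bsigma$ and $\pi\circ\bsigma^\atob$ agree on all $t$-queries, so $\bsigma^c|_Q = \bsigma^\unif|_Q$ for every $Q$ with $|Q|\le t$, giving pairwise $t$-indistinguishability of the family. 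What changes is the ``unique winner'' half: I must show that $c$ is the unique STV winner of $\bsigma^c$. Here I would use the fact that STV, like any reasonable rule, is \emph{neutral} (equivariant under relabeling candidates), so the winner set of $\pi\circ\bsigma$ is $\pi$ applied to the winner set of $\bsigma$, and on the $\pi(b)=c$ branch we use that $a$ beats $b$ under STV in $\bsigma$ to argue $c$ is singled out.

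The main obstacle I anticipate is exactly this ``unique STV winner'' step, and it is genuinely harder than the scoring-rule case. For scoring rules, \Cref{lemma:m_indist} could use linearity of the score across the $m!$ permutation-indexed sub-profiles, so mixing was transparent and the advantage of $a$ over $b$ in $\bsigma$ propagated additively. STV is highly nonlinear — the winner of a mixture is not determined coordinatewise by the winners of the components — so I cannot simply average. I expect to need a more careful, possibly ad hoc, construction of $\bsigma$ (perhaps starting from the \Cref{lem:construction} profile but blending it with a large ``background'' of uniform or carefully padded rankings) so that the STV elimination order is robustly controlled: the background should force every candidate other than $a,b$ to be eliminated in a predetermined, symmetric order, reducing the final rounds to the $\{a,b\}$ contest where the \Cref{lem:construction} asymmetry decides the winner. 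Verifying that this elimination order is forced, and that it survives the random relabeling $\pi$ so that exactly $c$ wins $\bsigma^c$, is where the bulk of the case analysis will lie; everything else reduces to the two already-proved bridge lemmas and the indistinguishability bookkeeping.
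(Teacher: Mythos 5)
Your high-level template (build $m$ pairwise $t$-indistinguishable profiles with distinct singleton STV winners, then invoke \Cref{lem2ToM}) is exactly the paper's, but the heart of the argument --- actually constructing those profiles --- is left open in your proposal, and the concrete design you sketch for it cannot work. You propose a background that forces every candidate other than $a, b$ to be eliminated in a predetermined, symmetric order, reducing the final round to the $\set{a,b}$ contest, where the \Cref{lem:construction} asymmetry decides the winner. But the final round of STV on the pair $\set{a, b}$ is decided by the pairwise margin $\Pr_{\sigma \sim \bsigma}[a \succ_\sigma b]$, which is a $2$-query. Since the theorem must handle $t = m - 1 \ge 2$ (the case that subsumes all others), indistinguishability pins this margin down: $\Pr_{\bsigma}[a \succ_\sigma b] = \Pr_{\bsigma^\atob}[a \succ_\sigma b] = 1 - \Pr_{\bsigma}[a \succ_\sigma b]$, so the margin is exactly $\tfrac12$. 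The final round is then a tie, both $a$ and $b$ are STV winners of \emph{both} profiles, and you cannot extract distinct singleton winners at all. The asymmetry between indistinguishable profiles can never live in a fixed final head-to-head; it can only be expressed through quantities that are not $t$-computable, i.e., through plurality scores, which means it must act on the elimination \emph{order}.

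This is exactly how the paper's proof works, and it also shows that your claim that \Cref{lemma:m_indist} ``cannot be applied as a black box'' is not quite right: the paper does apply it as a black box, just not to STV --- it applies it to the scoring vector $\balpha = (-1, 0, \dots, 0)$, the negated plurality vector. This yields $m$ pairwise $(m-1)$-indistinguishable profiles $\set{\bsigma^c}_{c \in C}$ in which each candidate $c$ is the unique plurality \emph{minimizer}, hence the candidate STV eliminates first. Each STV instance $\bsigmaSTV^{c}$ is then a mixture: weight $\varepsilon$ on $\bsigma^{\text{next}(c)}$ and weight $1 - \varepsilon$ on a single fixed background $\Unif(R)$, where $R$ is the set of rankings whose first and last candidates are consecutive in a fixed directed cycle $c_1 \to c_2 \to \cdots \to c_m \to c_1$. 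The background is fully symmetric in first-place shares, so the $\varepsilon$ part alone decides who is eliminated first; but it is engineered so that eliminating $c_k$ starves $c_{k+1}$: every ranking in $R$ with $c_k$ first ranks $c_{k+1}$ last, so transferred votes avoid $c_{k+1}$, giving it a plurality deficit of at least $(1-\varepsilon)\tfrac{1}{m(m-1)} > \varepsilon$ in the next round. An induction then shows the elimination order is the cycle starting at $\text{next}(c)$, so $c$ wins; different profiles launch this cascade at different points and thus have different unique winners, all while agreeing on every $(m-1)$-query. You correctly identified STV's nonlinearity as the obstacle and correctly guessed that a structured background is needed, but without these two ideas --- negated plurality fed into \Cref{lemma:m_indist}, and a cyclic background that converts the \emph{first} elimination into the entire elimination order --- the proof does not go through.
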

\begin{proof}
    Observe that when $m=2$, the STV winner is equivalent to the plurality winner, so this is directly implied by \Cref{thmPluralityCounter}. Fix $m \geq 3$. We would like to apply \Cref{lem2ToM}, and to do so, we will construct $m$ profiles $\{\bsigmaSTV^{c_1}, \dots, \bsigmaSTV^{c_m}\}$ that are all $(m-1)$-indistinguishable, but the STV winner on profile $\bsigmaSTV^{c_i}$ is candidate $c_i$. The construction of such profiles is as follows. Let $\alpha = (-1, 0, \dots, 0)$, the negation of the plurality score vector. Since on any profile $\bsigma$ and any candidate pair $a,b$, we have that $\plu_{\bsigma}(a) \ne \plu_{\bsigma}(b) \iff sc_a^{\alpha}(\bsigma) \ne sc_b^{\alpha}(\bsigma)$, we can use \Cref{lem:construction,lemma:m_indist} to obtain a set of ($m-1$)-indistinguishable profiles $\{\bsigma^c\}_{c \in M}$, where candidate $c$ is the unique $\alpha$ score maximizer on profile $\bsigma^c$; correspondingly, it is the unique plurality minimizer on profile $\bsigma^c$ due to the choice of $\alpha$.
    
    Next, fix a directed cycle of the candidates $D := c_1 \rightarrow c_2 \rightarrow \cdots \rightarrow c_m \rightarrow c_1$. Let $R$ be the set of all rankings such that the first and last candidates are consecutive in the cycle, i.e., $R = \set{\sigma \in \mathcal{L}(C) \mid (\sigma(1), \sigma(m)) \in D}$.  Choose $\varepsilon > 0$ such that $\varepsilon < \frac{1 - \varepsilon}{m(m - 1)}$ ($\varepsilon = \frac{1}{m^2}$ will do). Fix a candidate $c$, and let $\text{next}(c)$ be the subsequent candidate in the cycle, i.e., the unique candidate $c'$ such that $(c, c') \in D$. Define $\bsigmaSTV^{c}$ as follows: with probability $\varepsilon$, output a sample $\sigma$ from $\bsigma^{\text{next}(c)}$, with remaining probability $1 - \varepsilon$, select $\sigma$ from $\Unif(R)$. A visual representation of this can be found in \Cref{fig:profile-stv}.
    
    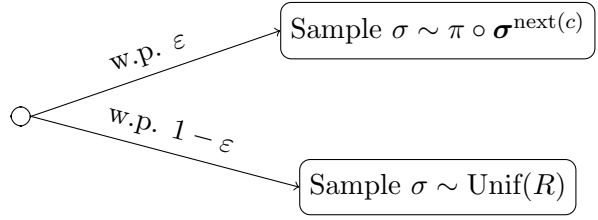
\begin{figure}[t]
     \begin{center}
     \begin{forest}
            for tree={
            grow=east, 
            parent anchor=east,
            child anchor=west, 
            rounded corners, draw,
            align=center,
            edge={->},
            s sep = 8ex
            }
            [{},l sep = 20 ex,
             [
                Sample $\sigma \sim \Unif(R)$,edge label={node[midway,sloped,above]{w.p. $1- \varepsilon$}}
            ]
            [
                Sample $\sigma \sim \pi \circ \bsigma^{\text{next}(c)}$,edge label={node[midway,sloped,above]{w.p. $\varepsilon$}}
            ]
            ]
        \end{forest}
    \end{center}
    \caption{A process inducing $\bsigmaSTV^c$.}\label{fig:profile-stv}
    \end{figure}
    
    Observe that for any pair $c, c'$, the profiles $\bsigmaSTV^{c}$ and $\bsigmaSTV^{c'}$ are $(m-1)$-indistinguishable. Indeed, their generating processes are the same with probability $1-\varepsilon$ (when we sample uniformly from $R$), and with probability $\varepsilon$ they differ due to $\bsigma^{c}$ and $\bsigma^{c'}$ which are themselves $(m-1)$-indistinguishable.

     We next show that on each profile $\bsigmaSTV^c$, the unique STV winner is $c$. 
     Without loss of generality, consider the candidate $c = c_m$ (so $\text{next}(c) = c_1$, $\text{next}(c_1) = c_2$, and so on) as the argument holds symmetrically for any other $c$.
     We will show by strong induction that the $k$'th candidate to be eliminated is $c_k$. This implies that $c_m$ will be the final candidate remaining, and thus the STV winner.
     
     We begin with the base case, that $c_1$ is the first to be eliminated. Note that when sampling uniformly from $R$, by symmetry, each candidate has the same plurality score. On the other hand, in $\bsigma^{\text{next}(c)}$, candidate $\text{next}(c) = c_1$ is the unique plurality minimizer. Thus, in the mixed profile $\bsigmaSTV^c$, $c_1$ has the lowest plurality score and is eliminated first.
     
     Next, suppose candidates $c_1, \ldots, c_{k - 1}$ for $k \ge 2$ (and $k \le m - 1$) have been eliminated. We will show that the next to be eliminated is $c_k$. Let $C^k = C \setminus \set{c_1, \ldots, c_{k - 1}}$ be the set of uneliminated candidates. We first consider the proportion of first-place votes each candidate $c \in C^k$ gets in $\Unif(R)|_{C^k}$. Let $R = R^1 \sqcup \cdots \sqcup R^m$ be a partition of $R$ such that $R^i$ contains the rankings with $c_i$ ranked first. By symmetry, these are each the same size. Note that for $i \ge k$, $c_i \in C_k$, so rankings in $R^i$ will continue to rank $c_i$ first, each accounting for a $1/m$ proportion of the rankings. For $i \le k - 2$, since $\text{next}(c_i) \notin C^k$, by symmetry, the first place votes of $R^i$ will be distributed equally among all candidates in $C^k$. For $R^{k-1}$ however, every $\sigma \in R^{k - 1}$ ranks $c_k$ last, and, since $|C^k| \ge 2$, no votes will go to $c_k$; instead, they will be spread equally among $C^k \setminus \set{c_k}$. Hence, the plurality score of $c_k$ on $\text{Unif}(R)|_{C^k}$ will be $\frac{1}{m \cdot (|C^k| - 1)} \ge \frac{1}{m(m - 1)}$ smaller than all other candidates. By the choice of $\varepsilon$, $(1 - \varepsilon)\frac{1}{m(m - 1)} > \varepsilon$, so no matter how many first place votes $c_k$ gets on $\bsigma^{\text{next}(c)}$, $c_k$ has the smallest plurality score on $\bsigmaSTV^c|_{C^k}$. Therefore, it is the next to be eliminated. 

    Finally, we apply \Cref{lem2ToM} to this set of $m$ profiles to conclude that there exists a profile where no $(m-1)$-query algorithm can determine the STV winner with probability greater than $\tfrac{1}{m}$. 
\end{proof}

\section{Query Complexity}\label{secQC}

In the previous section, we characterized when it was information-theoretically possible to find winning candidates using a certain query size. We turn now to focusing on those cases when it \emph{is} possible and prove bounds on the \emph{query complexity}. In other words, when it is possible to determine the winner with limited-sized query size, how many such queries are needed?

For an integer $k \le m$, we use the notation $\binom{S}{k}$ to denote the set of all subsets of $S$ of size $k$.
Fix a scoring vector $\balpha$ and let $t^*$ be the minimal value such that $\balpha \in R_{m, t^*}$. Suppose we can make queries of size $t \ge t^*$ and wish to find a candidate maximizing $\score^\balpha_\bsigma$. As a benchmark, note that if we make enough queries to be able to deduce $\bsigma |_S$ for all possible $S \in \binom{C}{t^*}$, then it is information-theoretically possible to find this winning candidate. Indeed, one could simulate any $t^*$-query algorithm using this (say the one from \Cref{lemRmtClosedForm}) as they would have the responses for all $t$-sized query. Let $\text{cov}(m, t, t^*)$ be the minimum number of subsets of size $t$ needed to cover all subsets of size $t^*$ out of a set of size $m$. This value is referred to as a \emph{covering number}; computing such covering numbers and optimal subset structures that induce them is a canonical problem in combinatorics with a rich history (see, e.g., \citet{covering}). For our purposes, a reasonable (and nearly tight) lower bound on covering numbers is
$$\text{cov}(m, t, t^*) \ge \frac{\binom{m}{t^*}}{\binom{t}{t^*}}.$$
This follows from a simple argument: There are $\binom{m}{t^*}$ subsets of size $t^*$, and each subset of size $t$ can cover at most $\binom{t}{t^*}$ subsets them. When $t$ and $t^*$ are treated as constant, then this value is $\Omega(m^{t^*})$.

Our primary question is whether we can cleverly choose queries to use fewer than $\text{cov}(m, t, t^*)$ queries. 
As a motivating example, consider instead finding a \emph{Condorcet winner} under our model. A Condorcet winner on profile $\bsigma$ is a candidate $a$ that beats all others in a pairwise competition. More formally, for all $b \ne a$, $\Pr_{\sigma \sim \bsigma}[a \succ_{\sigma} b] > 1/2$. Note that Condorcet winners need not exist,\footnote{The classic example is when a third of the voters have each of the rankings $a \succ b \succ c$, $c \succ a \succ b$, and $b \succ c \succ a$.} however when they do, they are unique. From the definition, we can see that using queries of size $t^* = 2$ is sufficient to determine whether a Condorcet winner exists and, if so, determine this candidate as this only depends on pairwise margins. 
One option is to make all possible queries of size $2$; this requires $\text{cov}(m, 2, 2) = \binom{m}{2} = \Theta(m^2)$ queries. However, as shown by \citet{Pro08b}, there is a more clever way, requiring only $O(m)$ queries to compute this winner.\footnote{The algorithm runs in two phases. First, run a knockout tournament among the candidates, where the candidate receiving more pairwise votes makes it on to the next round. If there is a Condorcet winner, then that candidate must be the winner of the knockout tournament. In the second phase, compare this winner to all other candidates they did not play. If they win all of these comparisons, they are the Condorcet winner, if not, there is no winner. This requires $2m - \floor{\log m} - 2$ queries. }

We now ask whether such improvements can be found for scoring rules. Unfortunately, we show that for both deterministic and randomized algorithms, they cannot.

\begin{theorem}\label{thmCovering}
Fix $\balpha \in \mathbb{R}^m$ and let $t^*$ be the minimal value such that $\balpha \in R_{m, t^*}$. For $t^* \ge 2$, any deterministic $t$-query algorithm that always outputs a candidate maximizing $\score^{\balpha}_{\bsigma}$ must make at least $\text{cov}(m, t, t^*)$ queries in the worst case\footnote{It is only when $\balpha$ is a constant vector does $t^* = 1$, a degenerate case that we ignore since any candidate can be considered the winner.}. Further, any randomized algorithm making at most $\delta \frac{\binom{m}{t^*}}{\binom{t}{t^*}}$ queries outputs an $\balpha$-winner with probability at most $\min(\delta + \frac{1}{m}, \delta + (1 - \delta) \frac{1}{t^*})$ in the worst case.
\end{theorem}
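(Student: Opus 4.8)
The plan is to reduce both bounds to a single ``hiding'' construction built on top of \Cref{lemCharacterizationForwardDirection}. First I would manufacture a base gadget by applying \Cref{lemCharacterizationForwardDirection} with its parameter set to $t^*-1$ (legitimate since $\balpha \notin R_{m,t^*-1}$ by minimality of $t^*$, and $t^*-1 \ge 1$). Inspecting that construction, the distinguished candidates $a,b$ lie in a block $C_1$ of size exactly $t^*$, the gadget over $C_1$ comes from \Cref{lem:construction}, and the remaining candidates sit in a fixed order; consequently the two profiles it produces, call them $\bsigma^\diamond$ and $(\bsigma^\diamond)^\atob$, agree on $\bsigma^\diamond|_Q$ for every $Q$ with $C_1 \not\subseteq Q$, yet $\score^\balpha_{\bsigma^\diamond}(a) \ne \score^\balpha_{\bsigma^\diamond}(b)$. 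Writing $S_0 := C_1$ (a $t^*$-set) and $D_0 := \bsigma^\diamond - (\bsigma^\diamond)^\atob$, this yields a signed measure with $D_0|_Q = 0$ whenever $S_0 \not\subseteq Q$ and with score vector $(\score^\balpha_{D_0}(c))_c$ nonzero and supported on $S_0$.

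The key structural step (the main obstacle) is to upgrade this single nonzero direction into full control over which element of $S_0$ wins. The space $W$ of signed measures $D$ with $D|_Q = 0$ for all $Q \not\supseteq S_0$ is invariant under relabelings $\pi \in \mathrm{Sym}(S_0)$, and the map $D \mapsto (\score^\balpha_D(c))_{c \in S_0}$ is equivariant (using $\score^\balpha_{\pi \circ D}(c) = \score^\balpha_D(\pi^{-1}(c))$, exactly as in \Cref{lemma:m_indist}) and lands in the zero-sum subspace of $\rr^{S_0}$ (the non-$S_0$ coordinates vanish, and all scores sum to the total mass $0$). Its image is therefore a nonzero $\mathrm{Sym}(S_0)$-invariant subspace of the standard representation, which is irreducible; hence the image is the entire zero-sum space. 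So for any target $w \in S_0$ there is $D_w \in W$ whose score vector is uniquely maximized at $w$, and $\bsigma_{S_0,w} := \bsigma^{\unif} + \varepsilon D_w$ is, for small $\varepsilon>0$, a valid profile whose unique $\balpha$-winner is $w$ while $\bsigma_{S_0,w}|_Q = \Unif(\mathcal{L}(Q))$ for every $Q$ with $S_0 \not\subseteq Q$. By candidate symmetry the same holds with $S_0$ replaced by any $t^*$-subset $S^*$.

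For the deterministic bound I would run the algorithm while answering every query with $\bsigma^{\unif}$. If it makes fewer than $\text{cov}(m,t,t^*)$ queries, those queries fail to cover some $t^*$-set $S^*$; let $c^*$ be its output. Pick $w \in S^* \setminus \{c^*\}$ (possible as $t^* \ge 2$) and feed instead $\bsigma_{S^*,w}$: since $S^* \not\subseteq Q$ for every query made, all responses are again uniform, so the deterministic algorithm repeats the identical query sequence and outputs $c^* \ne w$, contradicting correctness. Hence at least $\text{cov}(m,t,t^*)$ queries are required.

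For the randomized bound I would use the hard distribution that samples $S^* \sim \Unif(\binom{C}{t^*})$, then $w \sim \Unif(S^*)$, and presents $\bsigma_{S^*,w}$ (unique winner $w$, and marginally $w \sim \Unif(C)$). The crucial point is that every response is exactly uniform until some query $Q \supseteq S^*$ is made; thus, fixing the internal coins, the query sequence is a fixed $Q_1,\dots,Q_q$ independent of $(S^*,w)$ up to the first covering query, and coverage within $q \le \delta \binom{m}{t^*}/\binom{t}{t^*}$ queries occurs iff $S^* \subseteq Q_j$ for some $j$, which has probability at most $\sum_j \binom{|Q_j|}{t^*}/\binom{m}{t^*} \le q\binom{t}{t^*}/\binom{m}{t^*} \le \delta$. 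Conditioned on no coverage the view is independent of $w$, so the winner is output with probability $1/m$; overall the success probability is at most $\delta + (1-\delta)\tfrac1m$. Since $t^* \le m$ this is at most both $\delta + \tfrac1m$ and $\delta + (1-\delta)\tfrac1{t^*}$, giving the claimed minimum (and averaging over the distribution produces a single worst-case input attaining it). The two steps I expect to be most delicate are the irreducibility/symmetrization argument of the second paragraph and the verification that ``uniform until covered'' genuinely bounds the adaptive coverage probability by $\delta$.
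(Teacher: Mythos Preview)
Your deterministic argument is essentially the same as the paper's, and your representation-theoretic upgrade (using irreducibility of the standard $\mathrm{Sym}(S_0)$-representation to get $\bsigma_{S_0,w}$ for every $w\in S_0$) is a correct and elegant alternative to the paper's hands-on permutation-averaging. A small wrinkle: for the claim ``the non-$S_0$ score coordinates vanish'' you should work not with all of $W$ but with the $\mathrm{Sym}(S_0)$-span of $D_0$; for arbitrary $D\in W$ it is not clear that $\score^{\balpha}_D(c)=0$ when $c\notin S_0$, whereas for elements of the orbit-span it follows immediately because each $\pi\in\mathrm{Sym}(S_0)$ fixes $c$.

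The real gap is in the randomized bound. You assert that, conditioned on no coverage, the winner is output with probability $1/m$, giving $\delta+(1-\delta)\tfrac1m$. This does not follow: while the algorithm's view, and hence its output $c^*$, is indeed deterministic on that event, the event ``$S^*$ is uncovered'' depends on $S^*$, and $w$ is drawn from $S^*$. Conditioning on non-coverage skews the distribution of $S^*$, so $w$ is no longer uniform on $C$. Concretely, take $m=4$, $t^*=t=2$, two queries $\{1,2\},\{1,3\}$ (so $\delta=\tfrac13$), and output $4$ on the uniform path; then among the four uncovered pairs, candidate $4$ lies in three, so $\Pr[w=4\mid\text{uncovered}]=\tfrac38$, and (since a covering query generically reveals $w$ in your construction) the algorithm succeeds with probability $\tfrac13+\tfrac23\cdot\tfrac38=\tfrac{7}{12}>\tfrac12=\delta+(1-\delta)\tfrac1m$. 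Your setup \emph{does} yield the theorem's actual bound if you replace this step by the two inequalities the paper uses: $\Pr[\text{success}]\le\Pr[\text{covered}]+\Pr[w=c^*]\le\delta+\tfrac1m$ (union bound, using the unconditional marginal $w\sim\Unif(C)$), and $\Pr[\text{success}]\le\Pr[\text{covered}]+\Pr[\text{uncovered}]\cdot\tfrac1{t^*}\le\delta+(1-\delta)\tfrac1{t^*}$ (since $\Pr[w=c^*\mid S^*]\le\tfrac1{t^*}$ and the bound is increasing in the coverage probability). In fact your hard distribution has the same law of $(S^*,w)$ as the paper's, so the two analyses align once this step is fixed.
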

\begin{proof}
Fix $\balpha$ and $t^*$. Let $C_1$ be a set of candidates of size $t^*$ with two distinguished candidates $a, b \in C_1$, and let $C_2 = \overline{C_1}$ be the remaining candidates. We construct $\bsigma^{1}, \ldots, \bsigma^{m - t^*}$ just as in \Cref{fig:sigmai} from \Cref{subsec:characterization}. Recall that each profile $\bsigma^{i}$ has a profile $\bsigma \in \Pi(C_1)$ satisfying the conditions of \Cref{lem:construction} with $a$ and $b$ ``contained'' in it. In addition, it has $i - 1$ of the $C_2$ candidates ranked in a fixed order before $\bsigma$, and the rest are in a fixed order after. After describing these profiles in \Cref{subsec:characterization}, we observed that each $\bsigma^{i}$ and $(\bsigma^{i})^\atob$ are $(t^*-1)$-indistinguishable. 
However, we claim that an even stronger property is true: For any $t$ sized query $Q$ that does not contain $C_1$, $\bsigma^{i}$ and $(\bsigma^{i})^\atob$ are indistinguishable (note $t$ may be much larger than $t^*$). More formally, for all $Q$ such that $C_1 \not\subseteq Q$, $\bsigma^{i} |_Q = (\bsigma^{i})^\atob |_Q$. 
Indeed, the candidates of $C_2 \cap Q$ are always in the same order, either before or after the candidates of $C_1$. 
The candidates in $C_1$ will follow the distribution according to $\bsigma |_{Q \cap C_1}$ and $\bsigma^\atob | _{Q \cap C_1}$. 
By \Cref{lem:construction}, since $Q \cap C_1 \subsetneq C_1$, these are identical.

By definition of $t^*$, $\balpha \notin R_{m, t^* - 1}$. Hence, \Cref{lemCharacterizationForwardDirection} implies that for one of these profiles, $\score^{\balpha}_{\bsigma^{i}}(a) \ne \score^{\balpha}_{\bsigma^{i}}(b)$. Fix such an $i$, and without loss of generality, assume $\score^{\balpha}_{\bsigma^{i}}(a) > \score^{\balpha}_{\bsigma^{i}}(b)$. Consider the profile $\bsigma^*$ induced by sampling a permutation $\pi$ uniformly at random, and, if $\pi(c) = c$ for all $c \in C_1$, sample $\sigma \sim \pi \circ (\bsigma^{ i})^\atob$,  otherwise, sample $\sigma \sim \pi \circ \bsigma^{i}$. This is shown in \Cref{fig:profile-query}.

\begin{figure}
    \centering
    \begin{minipage}[c]{\textwidth}
         \begin{center}
             \begin{forest}
                for tree={
                grow=east, 
                parent anchor=east,
                child anchor=west, 
                rounded corners, draw,
                align=center,
                edge={->},
                s sep = 8ex
                }
                [{Choose permutation $\pi$\\uniformly at random},l sep = 30 ex,
                [
                    Sample $\sigma \sim \pi \circ \bsigma^i$,edge label={node[midway,sloped,above]{$\exists c \in C_1, \pi(c) \ne c$}}
                ]
                [
                    Sample $\sigma \sim \pi \circ (\bsigma^i)^\atob$,edge label={node[midway,sloped,above]{$\forall c \in C_1, \pi(c) = c$}}
                ]
                ]
            \end{forest}
    \end{center}
    \caption{A process inducing $\bsigma^*$.}\label{fig:profile-query}
    \vspace{2em}
    \end{minipage}
    
    \begin{minipage}[c]{\textwidth}
        \begin{center}
            \begin{forest}
                for tree={
                grow=east, 
                parent anchor=east,
                child anchor=west, 
                rounded corners, draw,
                align=center,
                edge={->},
                s sep = 8ex
                }
                [{Choose permutation $\pi$\\uniformly at random},l sep = 10ex,anchor=east
                [
                    {Sample $\sigma \sim \pi \circ \bsigma^i$},anchor=west
                ]
                ]
            \end{forest}
        \end{center}
        \caption{A process inducing $\bsigma^\unif$.}\label{fig:profile-query-unif}
    \end{minipage}
\end{figure}
    
Next, we will compare $\bsigma^*$ to another profile, $\bsigma^{\unif}$, defined in \Cref{fig:profile-query-unif}, where we sample from $\pi \circ \bsigma^i$ regardless of $i$. Note that $\bsigma^{\unif}$ is the uniform distribution over all rankings which can be equivalently achieved by first sampling $\sigma \sim \bsigma^i$, and then outputting $\pi \circ \sigma$ for a $\pi$ that is uniformly selected.\footnote{Note that although the generating process for $\bsigma^\unif$ here is different than the one used in \Cref{lemma:m_indist} and \Cref{fig:profile-unif}, the resulting distribution is still the same.} We will show two things (i) $\bsigma^*|_Q = \bsigma^{\unif}$ unless $C_1 \subseteq Q$, and (ii) $b$ is the unique $\balpha$-winner on $\bsigma^*$.  Note that (i) implies that on any query $Q$ with $C_1 \not\subseteq Q$, $\bsigma^*|_Q$ is the uniform distribution over rankings in $\mathcal{L}(Q)$.

For (i), fix a query $Q$ with $C_1 \not\subseteq Q$. Since the $\pi(c) \ne c$ for some $c \in C_1$ branch of $\bsigma^*$ is identical to $\bsigma^\unif$, it suffices to focus on $\pi$ such that $\pi(c) = c$ for all $c \in C_1$. When this holds, $\pi$ can only reorder the candidates of $C_2$, leaving candidates in $C_1$ unchanged. Therefore, $(\pi \circ (\bsigma^i)^\atob)|_Q = (\pi \circ \bsigma^i)|_Q$. Hence, sampling from $\bsigma^*|_Q$ is equivalent to sampling $\pi$ uniformly at random and sampling from $(\pi \circ \bsigma^i)|_Q$, equivalent to sampling from $\bsigma^\unif |_Q$.

For (ii), we will show that $\score^\balpha_{\bsigma^*}(b) > \score^\balpha_{\bsigma^\unif}(b)$ while $\score^\balpha_{\bsigma^*}(c) \le \score^\balpha_{\bsigma^\unif}(c)$ for all $ c\ne b$. By symmetry, the scores of all candidates in $\bsigma^\unif$ are the same, hence, this shows that $b$ is the unique winner. To that end, note that the only time in sampling $\bsigma^*$ and $\bsigma^\unif$ that the scores will differ, is if we take the top branch, in which case $\pi(c) = c$ for all $c \in C_1$. By assumption, $\pi(a) = a$ and $\pi(b) = b$ as  $a, b \in C_1$, so we are simply swapping $a$ and $b$. Since $\score^{\balpha}_{\bsigma^i}(a) > \score^{\balpha}_{\bsigma^i}(b)$, this strictly increases the score of $b$ on average, and strictly decreases the score of $a$. Hence, $\score^\balpha_{\bsigma^*}(b) > \score^\balpha_{\bsigma^\unif}(b)$, $\score^\balpha_{\bsigma^*}(a) < \score^\balpha_{\bsigma^\unif}(a)$, and $\score^\balpha_{\bsigma^*}(c) = \score^\balpha_{\bsigma^\unif}(c)$ for all $c \ne a, b$, as needed.

Fix a deterministic algorithm $t$-query algorithm $\mathcal{A}$ that outputs a candidate after making strictly fewer than $\text{cov}(m, t, t^*)$ queries. We will show that it cannot always output an $\balpha$-winner. Consider a run of the algorithm where on every query $Q$, it receives in response the uniform distribution over $\mathcal{L}(Q)$.
Suppose on this run, it outputs candidate $c$. Now, by the definition of the covering number, there must be a set $C'$ with $|C'| = t^*$ such that $C'$ was not contained in any query made by the algorithm. Since $t^* \ge 2$, $C' \setminus \set{c}$ is not empty. Let $c^* \in C' \setminus \set{c}$.
Let $\pi$ be a permutation such that $\pi(b) = c^*$ and $\pi$ maps $C_1 \setminus \set{b}$ to $C' \setminus \set{c^*}$.
Consider the running $\mathcal{A}$ on $\pi \circ \bsigma^*$. Note that on every query $Q$ not containing $C_1$, the response will be indistinguishable from $\bsigma^\unif$, and hence, the uniform distribution over $Q$. Therefore, by above, $\mathcal{A}$ will return candidate $c$ on this instance. However, by construction, $c^*$ is the unique $\balpha$-winner, and $c^* \ne c$, a contradiction.

Next, we will show that a randomized algorithm making at most $\delta \frac{\binom{m}{t^*}}{\binom{t}{t^*}}$ queries will output an $\balpha$-winner with probability at most $\min(\delta + \frac{1}{m}, \delta + (1 - \delta) \frac{1}{t^*})$. We will make use of Yao's Minimax Principle~\citep{Yao77}. More specifically, will show that there is a distribution over profiles such that no deterministic algorithm can be correct with larger probability. This implies that no randomized algorithm can achieve a larger probability on a worst-case profile.

The distribution over instances we will choose is simply uniform over $\pi \circ \bsigma^*$ for all permutations $\pi$. Fix an arbitrary deterministic algorithm $\mathcal{A}$ that always outputs a candidate after at most $\delta \binom{m}{t^*}/\binom{t}{t^*}$ queries. Consider a run of this algorithm where the response to every query $Q$ is the uniform distribution over $\mathcal{L}(Q)$, and suppose on this run, the output candidate is $c$. Let $\mathcal{Q}$ be the set of queries asked on this run. We have that $|\mathcal{Q}| \le \delta \binom{m}{t^*}/\binom{t}{t^*}$ by assumption. Observe that since each query of size $t$ can cover $\binom{t}{t^*}$ sets of size $t^*$, at most a $\delta$-fraction of the $\binom{m}{t^*}$ $t^*$-sets are covered by $\mathcal{Q}$.

We claim that the algorithm must be incorrect for all $\pi \circ \bsigma^*$ such that both (i) $\pi(C_1) \not\subseteq Q$ for all $Q \in \mathcal{Q}$ and (ii) $\pi(b) \ne c$. Indeed, (i) ensures that the run of the algorithm will always lead to uniform responses, meaning $\mathcal{A}$ \emph{must} output $c$, and (ii) ensures this is the incorrect choice. More formally, let $$\mathcal{E}_1 = \set{\pi \mid \pi(C_1) \not\subseteq Q \text{ for all } Q \in \mathcal{Q}}$$ be the event that the first property holds, and $$\mathcal{E}_2 = \set{\pi \mid \pi(b) \ne c}$$ be the event that the second does. The probability of success is at most $1 - \Pr[\mathcal{E}_1 \cap \mathcal{E}_2]$. We will upper bound this in two ways. First,
\begin{equation*}
    1 - \Pr[\mathcal{E}_1 \cap \mathcal{E}_2] = \Pr[\overline{\mathcal{E}_1} \cup \overline{\mathcal{E}_2}]
    \le \Pr[\overline{\mathcal{E}_1}] + \Pr[\overline{\mathcal{E}_2}]
    \le \delta + \frac{1}{m},
\end{equation*}
where the first inequality holds by the union bound, $\Pr[\overline{\mathcal{E}_1}] \le \delta$ because at most a $\delta$-fraction of all $t^*$-subsets are covered, and  $\Pr[\overline{\mathcal{E}_2}] \le \frac{1}{m}$ by symmetry. Second,
\begin{align*}
1 - \Pr[\mathcal{E}_1 \cap \mathcal{E}_2]
    &= 1 - \Pr[\mathcal{E}_2 \mid \mathcal{E}_1] \cdot \Pr[\mathcal{E}_1]\\
    &\le 1 - \left(1 - \frac{1}{t^*}\right) \cdot (1 - \delta)\\
    &= 1 - (1 - \delta) + (1 - \delta) \cdot \frac{1}{t^*}\\
    &= \delta + (1 - \delta) \cdot \frac{1}{t^*}.
\end{align*}
Again, $\Pr[\mathcal{E}_1] \ge 1 - \delta$ holds because at most a $\delta$-fraction are covered. The other term $\Pr[\mathcal{E}_2 \mid \mathcal{E}_1] \cdot \Pr[\mathcal{E}_1]$ holds because conditioned on $\pi(C_1) = C'$ for \emph{any} uncovered $C'$, the probability that $\pi(b) = c$ is at most $1 - \frac{1}{t^*}$. Indeed, this holds exactly if $c \in C'$, and is $0$ otherwise. Together, these two bounds imply that the probaility of success is at most $\min(\delta + \frac{1}{m}, \delta + (1 - \delta) \frac{1}{t^*})$.
\end{proof}

\begin{figure}
  \begin{minipage}[c]{0.615\textwidth}
    \includegraphics[width=\textwidth]{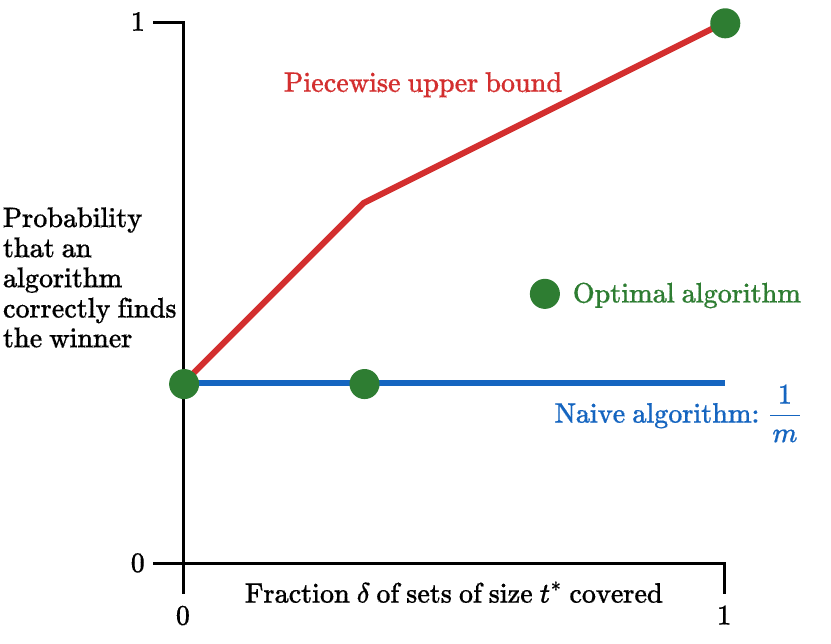}
  \end{minipage}\hfill
  \begin{minipage}[c]{0.355\textwidth}
    \caption{
       Success probabilities for randomized algorithms computing the Borda winner on $m = 3$ candidates making queries of size $t = t^* = 2$. The red piecewise-linear curve, which has similar shapes for larger parameters and other computable scoring rules, represents the upper bound on the success probability of any algorithm from Theorem~\ref{thmCovering}. The blue curve is the best-known general lower bound obtained by the algorithm that makes no queries and simply guesses a candidate at random. The green points are the true success probabilities of the optimal algorithm given by Theorem~\ref{thmQC32}.
    }
    \label{figQueryComplexityBounds}
  \end{minipage}
\end{figure}

 


Theorem~\ref{thmCovering} completely settles the query complexity of deterministic algorithms for computing all positional scoring rules. However, for randomized algorithms, the story is not quite complete. On the one hand, if $t^*$ and $t$ are constants, and an algorithm would like to be correct with probability $\frac1m + c$ for a constant $c$, then $\Omega(m^{t^*})$ queries are needed (hiding constants depending on $t^*, t$, and $c$), and $O(m^{t^*})$ clearly suffice, as regardless of $t$, $\binom{m}{t^*}$ are certainly enough to cover all sets. On the other hand, if an algorithm can make $\delta \frac{\binom{m}{t^*}}{\binom{t}{t^*}}$ queries for a fixed $\delta$, we do not know the exact probability with which it can be correct. Figure~\ref{figQueryComplexityBounds} depicts the gap between the upper bound and the best general lower bound as functions of the parameter $\delta$. This lower bound is essentially the naive algorithm achieving $\frac1m$ by simply picking a candidate at random. However, as the following result shows, even for the simplest nontrivial case of $m = 3$ and $t = t^* = 2$, the true query complexity of computing the (essentially unique) scoring rule in $R_{3, 2}$ is strictly between our general bounds when $\delta = \frac23$.


\begin{theorem}\label{thmQC32}
    With $m = 3$ candidates, the optimal randomized algorithm making queries of size $t = 2$ to compute the Borda count winner succeeds with
    \begin{enumerate}
        \item\label{itmQC32-1} worst-case success probability $\frac13$ when allowed to make exactly one query, and
        \item\label{itmQC32-2} worst-case success probability $\frac12$ when allowed to make exactly two queries.
    \end{enumerate}
\end{theorem}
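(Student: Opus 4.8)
The plan is to prove, for each part, a lower bound by exhibiting an algorithm and a matching upper bound via Yao's minimax principle applied to a carefully constructed hard distribution. Throughout I use that for $m=3$ a query $\{u,v\}$ returns exactly the margin $\Pr[u \succ v]$, and the Borda score of a candidate is the sum of its two pairwise win-probabilities, so the winner is a function of the margin triple $(x,y,z):=(\Pr[a\succ b],\Pr[b\succ c],\Pr[c\succ a])$, with $B(a)=x+(1-z)$, $B(b)=(1-x)+y$, $B(c)=z+(1-y)$. For part (\ref{itmQC32-1}) the lower bound is immediate: the algorithm that discards its query and outputs a uniformly random candidate is correct with probability at least $\tfrac13$ on every profile.

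For the part (\ref{itmQC32-1}) upper bound I would build, for each $\varepsilon>0$, a hard distribution $\mathcal{D}_\varepsilon$ concentrated near the total-tie profile: pick the intended winner uniformly, and conditioned on winner $a$ set $(x,y,z)=(\tfrac12+U+\gamma,\ \tfrac12+U,\ \tfrac12+U-\gamma)$ with $U\sim\Unif[-\tfrac16,\tfrac16]$ and $\gamma=\gamma(\varepsilon)$ small, rotating the pattern cyclically for winners $b,c$. Each triple lies in a neighborhood of the interior realizable point $(\tfrac12,\tfrac12,\tfrac12)$ and has a \emph{unique} winner, since the $(+\gamma,0,-\gamma)$ gaps force a strict Borda order. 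The point is that for any single query the three conditional laws of the revealed margin (one per winner) are all translates, by $+\gamma$, $0$, $-\gamma$, of one fixed uniform density; hence the best deterministic one-query success equals $\tfrac13\int\max$ of these three densities, which is $\tfrac13(1+O(\gamma))$. Letting $\gamma\to0$ and applying Yao yields the matching upper bound $\tfrac13$.

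For part (\ref{itmQC32-2}), two queries reveal two of the three margins and the missing one leaves the winner genuinely undetermined: if the pair $\{b,c\}$ is skipped, the two answers pin down $B(a)$ exactly, and the winner is a monotone function of the hidden margin $y$ — it is $c$, $a$, or $b$ according as $y$ lies below, inside, or above an interval determined by the observed margins. The lower bound needs an explicit randomized, adaptive two-query algorithm; since for $m=3$ pairwise and Borda comparisons can disagree (a Condorcet winner need not be the Borda winner), the rule cannot simply report a Condorcet winner. Using the monotone-threshold description, I would choose a guessing rule so that, averaging over a randomized choice of which pair to skip, every profile is answered correctly with probability at least $\tfrac12$; the cleanest route is to read off the optimal rule as the dual strategy to the hard distribution below (complementary slackness) and verify its value is exactly $\tfrac12$.

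The main obstacle is the part (\ref{itmQC32-2}) upper bound: a distribution $\mathcal{D}$ on which \emph{every} adaptive two-query deterministic algorithm succeeds with probability at most $\tfrac12$. By the threshold description, what is required is that for each skipped pair the conditional law of the missing margin places probability at most $\tfrac12$ on each resulting winner-outcome, and this must hold simultaneously for all three skip-choices while every profile still has a unique winner. The difficulty is that the obvious near-tie constructions leak information: in the cyclic near-center family from part (\ref{itmQC32-1}), merely observing which of two revealed margins is larger already identifies the winner whenever it is the ``extreme'' candidate of the observed pair, which lets a two-query algorithm reach $\tfrac23$. Eliminating this leak — so that no pair of revealed margins ever makes a single candidate more than half-likely, robustly against the adversary's adaptive choice of second query — is exactly what forces the intricate, explicit construction, whose optimality I would then confirm against the matching algorithm.
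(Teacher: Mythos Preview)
Your Part~(\ref{itmQC32-1}) argument is correct and in fact cleaner than the paper's: the cyclic near-tie family with a uniform shift $U$ does the job, and the translate-of-uniforms calculation is immediate. The paper uses the same Fibonacci-based distribution for both parts, which is overkill for the one-query case.

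The genuine gap is Part~(\ref{itmQC32-2}). You correctly diagnose why the cyclic family fails against two queries (the ordered pair of revealed margins determines the difference, which pins down the winner with probability $\tfrac23$), but you stop there: you do not actually construct a hard distribution, and you do not actually specify the $\tfrac12$-achieving algorithm, deferring both to an unspecified ``intricate, explicit construction'' and its dual. That is the entire content of the theorem's harder half.

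For the lower bound, the paper's algorithm is concrete and worth knowing: pick a uniformly random candidate $c'$, query the two pairs containing $c'$ to learn $\score(c')$ exactly, return $c'$ if $\score(c')>0$ and otherwise return one of the other two uniformly. A short case analysis on how many Borda scores are strictly positive gives worst-case success $\tfrac12$.

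For the upper bound, the paper's construction is based on the Fibonacci numbers. The key device is a discrete family of six profile types indexed by $(i,s,r)$ whose scaled margin triples involve $F_i,F_{i+1},F_{i+2}$; the Fibonacci identity $F_{i+2}-F_i=F_{i+1}$ is precisely what makes an observed difference $\hat p_2-\hat p_1$ ambiguous between two different values of $i$ (and hence two different winners), so that after seeing any two margins the best posterior guess is correct with probability exactly $\tfrac12$. Your proposal contains neither this idea nor any substitute for it, so as written it does not prove Part~(\ref{itmQC32-2}).
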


Before giving the proof, we note that the measure of worst-case optimality here is a bit finicky. On the negative side, we show that for all algorithms and any $\varepsilon > 0$, there are instances where they do not succeed with probability more than $\frac13 + \varepsilon$ for (\ref{itmQC32-1}), and $\frac12 + \varepsilon$ for (\ref{itmQC32-2}). At least for (\ref{itmQC32-1}), this relaxation is necessary. Consider the algorithm that makes a single query to a uniformly random pair of candidates and selects between them with the probabilities given by the query response (i.e., if the algorithm learns that $\Pr_{\sigma \sim \bsigma}[a \succ_\sigma b] = p$, it picks $a$ with probability $p$ and $b$ with probability $1-p$). It can be shown that this process selects each candidate with probability proportional to its Borda score.\footnote{The probability it picks a candidate $c$ is equal to $\frac{1}{\binom{m}{2}}\sum_{c' \ne c} \Pr_{\sigma \sim \bsigma}[c \succ c']$. It is well known that the Borda score of $c$ is equal to $\sum_{c' \ne c} \Pr_{\sigma \sim \bsigma}[c \succ c']$~\citep{Handbook}.} Given any fixed profile, unless all three candidates have the same score, a maximal one will be selected with probability strictly greater than $\frac13$ (and if they are all the same, then all are Borda winners, and hence the algorithm succeeds with probability $1$). However, there are instances where the best Borda score is arbitrarily close to the others, resulting in a success probability no constant greater than $\frac13$. Thus, by saying that ``the optimal'' randomized algorithm that makes a single query achieves a worst-case success probability of $\frac13$, we really mean that it is not possible to surpass $\frac13$ by any constant. In the proof of Theorem~\ref{thmQC32}, we must construct a family of increasingly more difficult instances that bring the success probabilities closer to $\frac13$ and $\frac12$. This becomes quite complicated, involving a construction based on Fibonacci numbers to ensure query responses do not leak cardinal information about the relative strengths of candidates.

\paragraph{Proof of \Cref{thmQC32}:} Fix a set of candidates $C = \set{a, b, c}$.
Throughout the proof, we use the scoring vector $(1, 0, -1)$ to compute Borda scores (which is equivalent to the more traditional choice of $(2, 1, 0)$ by translation). When writing scores, we drop the $(1, 0, -1)$ superscript in the score notation, using $\score_\bsigma(c')$ to refer to $\score^{(1, 0, -1)}_\bsigma(c')$. We also use the convention that when $\bsigma$ is queried on $\set{a, b}$, the algorithm learns $\Pr_{\sigma \sim \bsigma}[a \succ_\sigma b]$, on $\set{b, c}$, the algorithm learns $\Pr_{\sigma \sim \bsigma}[b \succ_\sigma c]$, and on $\set{a, c}$, the algorithm learns $\Pr_{\sigma \sim \bsigma}[c \succ_\sigma a]$. These single numbers completely parameterize the distribution $\bsigma|_Q$ for each $Q$ of size 2. One can check that the following equalities hold for scores
\begin{align*}
    \score_\bsigma(a) &= \Pr_{\sigma \sim \bsigma}[a \succ_\sigma b] - \Pr_{\sigma \sim \bsigma}[b \succ_\sigma c],\\
    \score_\bsigma(b) &= \Pr_{\sigma \sim \bsigma}[b \succ_\sigma c] - \Pr_{\sigma \sim \bsigma}[c \succ_\sigma a],\\
    \score_\bsigma(c) &= \Pr_{\sigma \sim \bsigma}[c \succ_\sigma a] - \Pr_{\sigma \sim \bsigma}[a \succ_\sigma b].
\end{align*}

We begin with the lower bounds on the probabilities. First, note that it is always possible to succeed with probability $\frac13$ by just picking a random one of the three candidates. This establishes the lower bound on (\ref{itmQC32-1}).
    
For (\ref{itmQC32-2}), consider the following algorithm. We pick a random candidate $c'$ and query both sets of size 2 containing that candidate. From this information, we are able to learn the Borda score of candidate $c'$. If the score is positive, we return $c'$. Otherwise, we randomly return one of the other two candidates. Observe that, with our choice of scoring vector $(1, 0, -1)$, the sum of all three Borda scores must be zero, so at most two are strictly positive. If none of them are positive, then they must all be zero, in which case every candidate is a Borda winner, and the algorithm succeeds with probability 1. If one Borda score is positive, then if that candidate is chosen as $c$, the algorithm succeeds with probability 1, and otherwise the algorithm succeeds with probability $\frac12$. Since the former case happens with probability $\frac13$, the total expected success probability is $\frac23$. Finally, if two Borda scores are positive, then if the true Borda winner is chosen as $c'$, the algorithm succeeds with probability 1; if the other candidate with positive Borda score is chosen as $c'$, the algorithm incorrectly returns it, succeeding with probability 0; and if the candidate with negative Borda score is chosen as $c$, the algorithm succeeds with probability $\frac12$. In total, the success probability is $\frac13 \cdot 1 + \frac13 \cdot 0 + \frac13 \cdot \frac12 = \frac12$. Thus, the worst-case success probability is $\frac12$.

For the upper bounds, we again use Yao's Minimax principle. That is, we will show that there are distributions over instances where no deterministic algorithm can output a Borda winner with probability more than $\frac13 + \varepsilon$ for any $\varepsilon > 0$. This implies that no randomized algorithm can do so on every instance.

Define a family of distributions over profile $D_1, D_2, D_3, \dots$ as follows. To generate $D_n$, we first sample three values, $p_1, p_2, p_3 \in [\frac13, \frac23]$ and return an arbitrary profile $\bsigma$ where
\begin{align*}
    p_1 &:= \Pr_{\sigma \sim \bsigma} [a \succ_\sigma b],\\
    p_2 &:= \Pr_{\sigma \sim \bsigma} [b \succ_\sigma c],\\
    p_3 &:= \Pr_{\sigma \sim \bsigma} [c \succ_\sigma a].
\end{align*}
Before showing how to sample $p_1, p_2, p_3$, we first show that such a profile $\bsigma$ satisfying the pairwise margins always exists. We claim that the following preference profile suffices.
\begin{table}[H]
    \centering
    \begin{tabular}{r|l}
        Ranking & Probability \\\hline
        $a \succ b \succ c$ & $p_2 - \frac13$\\
        $a \succ c \succ b$ & $\frac23 - p_2$\\
        $b \succ c \succ a$ & $p_3 - \frac13$\\
        $b \succ a \succ c$ & $ \frac23 - p_3$\\
        $c \succ a \succ b$ & $ p_1 - \frac13$\\
        $c \succ b \succ a$ & $ \frac23 - p_1$
    \end{tabular}
\end{table}
The reader may verify that:
\begin{itemize}
    \item All probabilities lie in $[0, 1]$ for $p_1, p_2, p_3 \in [\frac13, \frac23]$.
    \item The sum of all six probabilities is 1.
    \item The pairwise ranking probabilities are indeed given by $p_1$, $p_2$, and $p_3$.
\end{itemize}

As shown above, the $p_i$ values contain all relevant information for computing the Borda scores of each candidate: $\score_\bsigma(a) = p_1 - p_3$, $\score_\bsigma(b) = p_2 - p_1$, and $\score_\bsigma(c) = p_3 - p_2$, along with the responses for all the queries. Hence, the exact construction of $\bsigma$ will not be important for the remainder of the proof.

Let $F_1, F_2, F_3, \dots$ be the Fibonacci sequence shifted to the left by one, beginning with $F_1 := 1$, $F_2 := 2$, $F_3 := 3$, $F_4 := 5$, and so on. For any positive integer $n$, we generate $p_1$, $p_2$, and $p_3$ for $D_n$ using the following process. First, sample $i$ uniformly from $\{1, 2, \dots, n\}$, sample $s$ uniformly from $\{0, 1, 2, \dots, nF_{n + 2}\}$ (all integers from $0$ to $nF_{n + 2}$), and sample $r$ uniformly from $\{1, 2, 3, 4, 5, 6\}$. Then output the profile $\bsigma(i, s, r)$ defined by the table below, where $p_1$, $p_2$, and $p_3$ are defined from the auxiliary values $\hat{p}_1$, $\hat{p}_2$ and $\hat{p}_3$ by the correspondence
$$p_j := \frac13 + \frac{1}{3}\cdot\frac{1}{((n + 1)F_{n + 2})} \cdot \hat{p}_j.$$
\begin{table}[H]
    \centering
    \begin{tabular}{r|c|l}
        Profile & True Borda winner & Scaled probabilities $(\hat{p}_1, \hat{p}_2, \hat{p}_3)$ \\\hline
        $\bsigma(i, s, 1)$ & $a$ & $\left(s + F_{i + 2}, s, s + F_{i}\right)$\\
        $\bsigma(i, s, 2)$ & $c$ & $\left(s + F_{i + 2}, s, s + F_{i + 1}\right)$\\
        $\bsigma(i, s, 3)$ & $b$ & $\left(s + F_{i}, s + F_{i + 2}, s\right)$\\
        $\bsigma(i, s, 4)$ & $a$ & $\left(s + F_{i + 1}, s + F_{i + 2}, s\right)$\\
        $\bsigma(i, s, 5)$ & $c$ & $\left(s, s + F_{i}, s + F_{i + 2}\right)$\\
        $\bsigma(i, s, 6)$ & $b$ & $\left(s, s + F_{i + 1}, s + F_{i + 2}\right)$
    \end{tabular}
\end{table}
Note that each $p_j \in [\frac13, \frac23]$ if and only if $\hat{p}_j \in [0, (n + 1)F_{n + 2}]$, so any pair $(i, s)$ is valid. We leave the reader to verify that the Borda winners are as stated in the table. We explicitly compute the winner in the final profile as an example:
\begin{align*}
    \score_{\bsigma(s, i, 6)}(a) &= p_1 - p_3 = \frac13 + \frac{1}{3}\cdot\frac{1}{((n + 1)F_{n + 2})} \cdot (\hat{p}_1 - \hat{p}_3) \\&= \frac13 + \frac{1}{3}\cdot\frac{1}{((n + 1)F_{n + 2})} \cdot ((s) - (s + F_{i + 2})) = \frac13 + \frac{1}{3}\cdot\frac{1}{((n + 1)F_{n + 2})} \cdot (-F_{i + 2})\\
    \score_{\bsigma(s, i, 6)}(b) &= p_2 - p_1 = \frac13 + \frac{1}{3}\cdot\frac{1}{((n + 1)F_{n + 2})} \cdot (\hat{p}_2 - \hat{p}_1) \\&= \frac13 + \frac{1}{3}\cdot\frac{1}{((n + 1)F_{n + 2})} \cdot ((s + F_{i + 1}) - (s)) = \frac13 + \frac{1}{3}\cdot\frac{1}{((n + 1)F_{n + 2})} \cdot (F_{i + 1})\\
    \score_{\bsigma(s, i, 6)}(c) &= p_3 - p_2 = \frac13 + \frac{1}{3}\cdot\frac{1}{((n + 1)F_{n + 2})} \cdot (\hat{p}_3 - \hat{p}_2) \\&= \frac13 + \frac{1}{3}\cdot\frac{1}{((n + 1)F_{n + 2})} \cdot ((s + F_{i + 2}) - (s + F_{i + 1})) = \frac13 + \frac{1}{3}\cdot\frac{1}{((n + 1)F_{n + 2})} \cdot (F_i)
\end{align*}
Thus, the winner is $b$, since $F_{i + 1}$ is the largest out of $\{-F_{i + 2}, F_{i + 1}, F_{i}\}$.

Fix a deterministic algorithm $\mathcal{A}$ making at most one query of size $2$. Observe that $D_n$ is completely symmetric with respect to $p_1$, $p_2$, and $p_3$ (in the sense that permuting $p_1 \mapsto p_2$, $p_2 \mapsto p_3$, and $p_3 \mapsto p_1$ gives the same distribution on preference profiles). Thus, we may assume without loss of generality that an algorithm queries $\set{a, b}$ and learns $\hat{p}_1 \in \{0, 1, 2, \dots, (n + 1)F_{n + 2}\}$. Let $\mathcal{E}$ be the event that $i \in \{3, 4, 5, \dots, n - 2\}$ and $s \in [F_{n + 2}, (n - 2)F_{n + 2}]$. By the union bound, the probability that $\mathcal{E}$ does not occur is at most
\begin{align*}
    \Pr[\overline{\mathcal{E}}] &\leq \Pr[i \in \{1, 2, n - 1, n\}] + \Pr[s \in [0, F_{n + 2}] \cup [(n - 2)F_{n + 2}, nF_{n + 2}]]\\
    &\leq \frac{4}{n} + \frac{F_{n + 2} + 2F_{n + 2}}{nF_{n + 2}}\\
    &= \frac7n.
\end{align*}
When $\mathcal{E}$ occurs, we will have $\hat{p}_j \in [F_{n + 2}, (n - 1)F_{n + 2}]$ for each $j$, so in particular this holds for $\hat{p}_1$. Suppose that the algorithm additionally learns $i$, which only makes it stronger. Then there are exactly six possible choices of the parameters $s$ and $c$ that could have led to the specific realization $\hat{p}_1$:
\begin{itemize}
    \item $r = 1$ and $s = \hat{p}_1 - F_{i + 2}$ $\implies a$ is the winner.
    \item $r = 2$ and $s = \hat{p}_1 - F_{i + 2}$ $\implies c$ is the winner.
    \item $r = 3$ and $s = \hat{p}_1 - F_{i}$ $\implies b$ is the winner.
    \item $r = 4$ and $s = \hat{p}_1 - F_{i + 1}$ $\implies a$ is the winner.
    \item $r = 5$ and $s = \hat{p}_1$ $\implies c$ is the winner.
    \item $r = 6$ and $s = \hat{p}_1$ $\implies b$ is the winner.
\end{itemize}
Since each possibility is equally likely, no matter which candidate the algorithm picks it succeeds with probability $\frac13$. Thus, overall,

\begin{align*}
    \Pr[\txt{success}] &= \Pr[\mathcal{E}]\Pr[\txt{success} \suchthat \mathcal{E}] + \Pr[\overline{\mathcal{E}}]\Pr[\txt{success} \suchthat \overline{\mathcal{E}}]\\
    &\leq \Pr[\txt{success} \suchthat \mathcal{E}] + \Pr[\overline{\mathcal{E}}]\\
    &\leq \frac13 + \frac7n.
\end{align*}
By picking $n$ sufficiently large, we see that no algorithm that makes a single query can achieve a worst-case success probability of $\frac13 + \varepsilon$ for any $\varepsilon > 0$. 

For an algorithm that makes two queries, we similarly assume without loss of generality that the queries yielded the values of $\hat{p}_1$ and $\hat{p}_2$ in some order. Regardless of whether the second query was made adaptively or nonadaptively, we will argue that, from these responses the algorithm cannot determine the winner with probability greater than $\frac12 + \frac7n$. As before, it suffices to show that the algorithm succeeds with probability at most $\frac12$ when $\hat{p}_1, \hat{p}_2 \in [F_{n + 2}, (n - 1)F_{n + 2}]$. Here there are two cases to consider, depending on which observed value is larger.

First suppose $\hat{p}_1 > \hat{p}_2$, and let $j \in \{3, 4, 5, \dots, n\}$ be such that $\hat{p}_1 - \hat{p}_2 = F_j$. Then there are exactly two possible choices of the parameters $s$, $i$, and $c$ that could have led to the specific realizations $\hat{p}_1$ and $\hat{p}_{2}$:
\begin{itemize}
    \item $r = 1$, $i = j - 2$, and $s = \hat{p}_1 - F_{i + 2}$ $\implies a$ is the winner.
    \item $r = 2$, $i = j - 2$, and $s = \hat{p}_1 - F_{i + 2}$ $\implies c$ is the winner.
\end{itemize}
Thus, $a$ and $c$ are equally likely to be the winner, so no matter which candidate the algorithm returns, it will be correct with probability at most $\frac12$.

Now suppose $\hat{p}_1 < \hat{p}_2$, and let $j \in \{3, 4, 5, \dots, n\}$ be such that $\hat{p}_2 - \hat{p}_1 = F_j$. Then there are exactly four possibilities for $s$, $i$, and $c$:
\begin{itemize}
    \item $r = 3$, $i = j - 1$, and $s = \hat{p}_1 - F_{i + i}$ $\implies b$ is the winner. Note that this is the first place we make use of the Fibonacci recurrence: $\hat{p}_2 - \hat{p}_1 = (s + F_{i + 2}) - (s + F_i) = F_{i + 1} = F_j$.
    \item $r = 4$, $i = j$, and $s = \hat{p}_1 - F_{i + 1}$ $\implies a$ is the winner (again using the Fibonacci recurrence).
    \item $r = 5$, $i = j$, and $s = \hat{p}_1$ $\implies c$ is the winner.
    \item $r = 6$, $i = j - 1$, and $s = \hat{p}_1$ $\implies b$ is the winner.
\end{itemize}
Clearly, $b$ is the best guess here, but it is still only correct with probability $\frac12$.

Thus, picking $n$ sufficiently large as before, we conclude that no algorithm making only two queries can achieve a worst-case success probability of $\frac12 + \varepsilon$ for any $\varepsilon > 0$. \qed 

\section{Discussion}

Voting rules are increasingly applied to aggregate preferences across a large range of candidates, from primary elections to online opinions. This can, however, be at odds with the cognitive and implementation challenges that exist when requiring individuals to specify preferences across a large selection. Naturally, in such scenarios, voters end up specifying preferences over a limited set, whether by explicit design or implicitly by submitting incomplete votes. Our work studies the implications of this phenomenon on the computability of voting rules.

For the large class positional scoring rules, we provide an exact information-theoretic characterization of what can and cannot be correctly computed under this model. Specifically, a decrease in query size equivalently diminishes the dimension of the computable scoring vector space. We explicitly characterize these spaces, finding that, while the Borda count is included for $t \ge 2$, Plurality is not for \emph{any} limited-sized query. We also extend this strong impossibility to STV. From a practical perspective, these results demonstrate the pitfalls of common rules like Plurality and STV within the setting incomplete votes and point to the space of alternative rules. Future work could go beyond voting and further investigate this question of computability with limited-sized queries for other social choice rules or other more general functions, such as committee selection with rankings.

For rules computable with at least $t^*$-sized queries, we also give bounds on the query complexity of any deterministic or randomized algorithm making $t \geq t^*$ sized queries. While we show that deterministic algorithms must cover the space of all $t^*$ sized queries in the worst-case, thus giving a tight bound, the picture for randomized algorithms is far less clear. We give an upper bound on the success probability when using a given number of queries, yet no known general-purpose algorithm achieves anything close to it. In \Cref{thmQC32}, we close this gap for a special case of the Borda rule by constructing surprisingly intricate hard instances. Closing the general query complexity gap in our randomized setting is an intriguing open problem whose technical depth is illustrated by this result.

\bibliographystyle{ACM-Reference-Format}
\bibliography{abb,references}

\end{document}